\documentclass[9pt,journal]{IEEEtran}

\usepackage{epsfig}
\usepackage{amssymb}
\usepackage{amsmath}
\usepackage{color}
\usepackage{amsfonts}
\newtheorem{theorem}{Theorem}

\newtheorem{definition}{Definition}
\newtheorem{lemma}{Lemma}

\newtheorem{remark}{Remark}

\newtheorem{assumption}{Assumption}

\newtheorem{proof}{Proof}
\newtheorem{proof of Theorem 1}{Proof of Theorem 1}

\begin{document}

\title{Convergence Analysis of Continuous-Time Distributed\\
Stochastic Gradient Algorithms}

\author{Jianhua Sun, Kaihong Lu, and Xin Yu 
\thanks{ }
\IEEEcompsocitemizethanks{
	\IEEEcompsocthanksitem Corresponding author: Kaihong Lu
	\IEEEcompsocthanksitem J. Sun and K. Lu are with the College of Electrical Engineering and Automation, Shandong University of Science and Technology, Qingdao 266590, China (e-mail: Sunjhsun@163.com; khong{\_}lu@163.com).

X. Yu is with the School of Electrical Engineering and Automation,
Jiangsu Normal University, Xuzhou 221116, China (e-mail: yuxin218@163.com).}
}
\maketitle
\begin{abstract}

In this paper, we propose a new framework to study distributed optimization problems with stochastic gradients by employing a multi-agent system with continuous-time dynamics. Here the goal of the agents is to cooperatively minimize the sum of convex objective functions. When making decisions, each agent only has access to a stochastic gradient of its own objective function rather than the real gradient, and can  exchange local state information with its immediate neighbors via a time-varying directed graph.  Particularly, the stochasticity is depicted by the Brownian motion. To handle this problem, we propose a continuous-time distributed stochastic gradient algorithm based on the consensus algorithm and the gradient descent strategy. Under mild assumptions on the connectivity of the graph and objective functions, using convex analysis theory, the Lyapunov theory and It\^{o} formula, we prove that the states of the agents asymptotically reach a common minimizer in expectation. Finally, a simulation example is worked out to demonstrate the effectiveness of our theoretical results.

\end{abstract}

\begin{keywords}
Multi-agent networks; distributed stochastic optimization; convex optimization; consensus.
\end{keywords}

\IEEEpeerreviewmaketitle
\section{Introduction}
\IEEEPARstart{I}{n} distributed optimization, the goal of the agents is to cooperatively minimize a global objective function that formed by the sum of local objective functions \cite{A1}. Distributed optimization has attracted increasing attention in recent years \cite{B1}-\cite{B4}. This is due to its good performance such as flexibility, strong robustness and low communication costs, as well as its wide applications in many areas such as distributed resource allocation \cite{C1}, machine learning \cite{C2} and visual human localization \cite{C3}.

Recently, considering that distributed optimization occurs in uncertain environments where real gradient information is not available, various results on distributed stochastic optimization algorithms have been achieved. For example, for distributed convex optimization, a distributed stochastic gradient tracking algorithm is proposed in \cite{C5}, where exponential convergence in expectation is achieved under a constant step size. With strongly convex functions considered, distributed stochastic gradient descent algorithms are presented in \cite{C7}, \cite{C8}, where the convergence in expectation is analyzed. For the case with non-convex objective functions, the convergence in high probability of the distributed stochastic gradient descent algorithm is achieved in \cite{A3}. For distributed optimization problems with non-convex objective functions, a distributed stochastic algorithm combining the gradient tracking algorithm and the variance reduction strategy is proposed in \cite{C6}. The case of coupled constraints is studied in \cite{C9}, where a distributed stochastic coupled diffusion algorithm is developed. In addition, distributed optimization problems with global inequality constraints are investigated in \cite{E1} using a class of stochastic gradient algorithms, while constrained optimization problems subject to time-varying communication noise are addressed in \cite{E2} via a distributed stochastic composite mirror descent method.

All the aforementioned investigations focus on discrete-time stochastic optimization algorithms. Compared with discrete-time algorithms, continuous-time algorithms activate more powerful analysis tools in control theory. In many engineering applications, the systems often possess continuous-time dynamics. For instance, in the cooperative delivery problem of unmanned ground vehicles, the dynamics of each vehicle is continuously changing and is typically modeled as a continuous-time system, where the goal of vehicles is to minimize the operational cost \cite{A2}. In fact, some new progress has been made in continuous-time distributed optimization algorithms. For example, the continuous-time distributed optimization of a sum of convex functions over directed graphs is studied in \cite{D1}. In \cite{C10}, two novel distributed continuous-time algorithms based on the subgradient method are proposed for solving constrained convex optimization problem. In \cite{C11}, a distributed subgradient-based algorithm is proposed for continuous-time multi-agent systems to find a feasible solution to convex inequality. A continuous-time algorithm for solving distributed, non-smooth, and pseudoconvex optimization problems with local convex inequality constraints is developed in \cite{C14}. A distributed Support Vector Machine method to solve the binary classification problem via dynamic balanced directed graphs is proposed in \cite{C13}. Moreover, to ensure zero constraint violation, a continuous-time distributed optimization algorithm is proposed in \cite{C15}.

It is worth noticing that when running continuous-time distributed algorithms in \cite{D1}-\cite{C15}, the accurate gradient information is necessary. However, in many practical applications, the gradients are usually influenced by noises, and their real values are not available. For instance, in distributed tracking problems in sensor networks, distance measurements from sensors to the target are usually influenced by measurement noise \cite{C4}. In such cases, agents can only make decisions using the stochastic estimates on the gradient information. For continuous-time dynamics, the Brownian motion with unbounded variance, which is widely used to capture the volatility of stock prices in the financial field \cite{C21} and the uncertainties in the field of stochastic control systems \cite{C22}, plays a fundamental role in modeling the stochasticity of the noises. Due to the influence of the Brownian motion, the stochastic gradients are not differentiable. Thus, the existing results on deterministic continuous-time distributed algorithms in \cite{D1}-\cite{C15} are not applicable to the stochastic cases, and it is necessary to adopt new tools to analyze the convergence of continuous-time distributed stochastic optimization algorithms. These factors motivate the study of this work.

In this paper, we consider a distributed convex optimization by employing a multi-agent system with continuous-time dynamics, where each agent exchanges local state information with its immediate neighbors via a time-varying directed communication graph. Unlike the continuous-time distributed optimization with accurate gradient information studied in \cite{D1}-\cite{C15}, we study the case where the agents only have access to the stochastic  gradients of their own objective functions. Moreover, different from discrete-time cases \cite{C5}-\cite{E2} where the stochasticity in the stochastic gradients is captured by the Gaussian or sub-Gaussian noise model, here the stochasticity is modeled by the Brownian motion. To address such a problem, we propose a new continuous-time distributed stochastic gradient algorithm
based on the consensus algorithm and the gradient descent strategy. The stochastic gradients are not continuous because the Brownian motion is not differentiable. This brings difficulties in analyzing the convergence of the algorithm. The It\^{o} formula is employed to deal with the non-continuity of the stochastic gradients. By combining the convex analysis theory, graph theory, and Lyapunov stability theory, the convergence of the proposed algorithm is analyzed. The results show that, if the time-varying directed graph is balanced and the uniformly strongly
connected, then the states of the agents asymptotically reach a common minimizer in expectation. Finally, a simulation example is presented to demonstrate the effectiveness of our theoretical results.

The rest of this paper is organized as follows. In Section II, some mathematical preliminaries are introduced. In Section III, the problem to be studied is formulated. In Section IV, we state our main result, and its proof is provided in detail. In Section V, a simulation example is presented. Section VI concludes the whole paper. In Section VII, proofs of the relevant lemmas are given in detail.

\section{preliminaries}\label{se1}

\subsection{Mathematical preliminaries}
In this section, we define the mathematical notations to be used and review some basic concepts related to the It\^{o} process.

Throughout this paper, $|x|$ is used to represent the absolute value of scalar $x$. $\mathbb{R}$ denotes the set of real numbers. $\mathbb{R}^m$ and $\mathbb{R}^{m\times n}$ represent the $m$-dimensional real vector space and the $m\times n$ real matrix space, respectively. $\mathbf{1}_m$ and $\mathbf{0}_m$ are $m$-dimensional vectors whose elements are all 1 and 0, respectively. $I_n$ is the $n\times n$ identity matrix. For any two vectors $\mu\in \mathbb{R}^m$ and $\nu\in \mathbb{R}^m$, the operator $\langle\mu,\nu\rangle$ denotes the inner product of $\mu$ and $\nu$. For a matrix $A\in \mathbb{R}^{m\times n}$, $[A]_{ij}$ denotes the matrix entry in the $i^{\mathrm{th}}$ row and $j^{\mathrm{th}}$ column, and $[A]_{i\cdot}$ represents the $i^{\mathrm{th}}$ row of the matrix $A$. We use $\|\cdot\|$ to denote the standard Euclidean norm for vectors $x\in \mathbb{R}^m$ and the Frobenius norm for matrices $A\in \mathbb{R}^{m\times n}$, i.e., $\|x\|=\sqrt{x^Tx}$ or $\|A\|=\sqrt{\sum_{i=1}^m\sum_{j=1}^nA_{ij}^2}$.
For single-valued function $f(\cdot): \mathbb{R}^m\rightarrow \mathbb{R}$, we denote its gradient and its Hessian matrix by $\nabla f$ and $\nabla_2 f$, respectively. For double-valued function $g(\cdot, \cdot): \mathbb{R}^m\times \mathbb{R} \rightarrow \mathbb{R}$, we denote its gradient and its Hessian matrix by $\nabla_i g(\cdot, \cdot)$ and $\nabla_{ii} g(\cdot, \cdot)$ with respect to the $i^{\mathrm{th}}$ argument, respectively. For matrices $A$ and $B$, the Kronecker product is denoted by $A\otimes B$. $\mathbb{E}[\cdot]$ denotes the expectation of a random variable. Given two positive variables $a$ and $b$, $a=\mathcal{O}(b)$
means that the order of $a$'s bound is not larger than the order of $b$, i.e., $\mathrm{lim}_{b\rightarrow \infty}\frac{a}{b}< \infty$.
Given a set of vectors $x_i\in \mathbb{R}^m$, where $i\in \mathcal{V}$ and $\mathcal{V}=\{1,\cdots,n\}$, we denote $\mathrm{col}(x_i)_{i\in \mathcal{V}}=[x_1^T,\cdots, x_n^T]^T$. Moreover,  $\mathrm{diag}(x_i)_{i\in \mathcal{V}}$ denotes the block-diagonal matrix $\mathrm{diag}(x_1,\ldots,x_n)$.

Let $(\Omega, \mathcal{F}, \{\mathcal{F}_t\}_{t\ge0}, \mathbb{P})$ be a filtered probability space.
For $p>0$ and $0\le a<b$, an $\mathbb{R}^m$-valued process
$f=\{f(t)\}_{t\in[a,b]}$ belong to $\mathcal{L}^p([a,b],\mathbb{R}^m)$, denoted by $\{f(t)\}_{t\in[a,b]}\in \mathcal{L}^p([a,b],\mathbb{R}^m)$,
if it is $\{\mathcal{F}_t\}$-adapted and satisfies
\[\int_a^b \|f(t)\|^pdt<\infty.\]
Furthermore, for $p>0$ and $0\le a<b$, an $\mathbb{R}^m$-valued process
$f=\{f(t)\}_{t\in[a,b]}$ belong to $\mathcal{M}^p([a,b],\mathbb{R}^m)$, denoted by $\{f(t)\}_{t\in[a,b]}\in \mathcal{M}^p([a,b],\mathbb{R}^m)$,
if it is $\{\mathcal{F}_t\}$-adapted and satisfies
\[\mathbb{E}\left[\int_a^b \|f(t)\|^p\,dt\right]<\infty.\]
We write $h\in \mathcal{L}^p(\mathbb{R}_+,\mathbb{R}^m)$, with $p>0$, if $h\in \mathcal{L}^p([0,t],\mathbb{R}^m)$ for any $t>0$.

Below, we briefly review the relevant concepts of It\^{o} process and It\^{o} formula. For more detailed information, please refer to \cite{C16}-\cite{C19}. Let $B=\{B(t)\}_{t\geq0}$ be a one dimensional Brownian motion defined on our probability space and let $x=\{x(t)\}_{t\geq0}$ be an $\mathcal{F}(t)$-adapted process taking values on $\mathbb{R}^m$.

\begin{definition}\label{DE1}
Let $b\in \mathcal{L}^1(\mathbb{R}_+,\mathbb{R}^m)$ and $\sigma\in \mathcal{L}^2(\mathbb{R}_+,\mathbb{R}^m)$. A process $\{x(t)\}_{t\geq0}$ is called an It\^{o} process if
\[x(t)=x_0+\int_0^t b(s)ds+\int_0^t \sigma(s)dB(s),\]
which is equivalently written as
\begin{equation}\label{eq130}
dx(t)=b(t)dt+\sigma(t)dB(t).
\end{equation}
\end{definition}
In the following lemma, let us recall the It\^{o} formula.
\begin{lemma}[It\^{o} formula]\label{LEIT1}
Consider It\^{o} process (\ref{eq130}). Given function $\mathcal{E}(x,t): \mathbb{R}^m\times \mathbb{R}\rightarrow \mathbb{R}$, if $\mathcal{E}(x,t)$ is twice differentiable with respect to $x$ and differentiable with respect to $t$, then
\begin{equation*}\label{eqIt}
\begin{split}
d\mathcal{E}(x(t),t)&=\frac{\partial\mathcal{E}(x(t),t)}{\partial t}dt+\langle\nabla_1\mathcal{E}(x(t),t),f(t)\rangle dt \\
&~~~+\frac{1}{2}\mathrm{Tr}\left(\sigma(t)\sigma(t)^T\nabla_{11}\mathcal{E}(x(t),t)\right)dt \\
&~~~+\langle\nabla_1\mathcal{E}(x(t),t),\sigma(t)\rangle dB(t).
\end{split}
\end{equation*}
\end{lemma}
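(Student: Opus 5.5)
The plan is to derive Lemma~\ref{LEIT1} from a second-order Taylor expansion of $\mathcal{E}$ along a partition of $[0,t]$, and then to identify the limit of each group of terms through the quadratic-variation rules of Brownian motion. In the statement, $f(t)$ is the drift written $b(t)$ in (\ref{eq130}), and we read the hypothesis as $\mathcal{E}$ being $C^{1}$ in $t$ and $C^{2}$ in $x$ with continuous derivatives.

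\emph{Reduction and expansion.} We first localize. Let $\tau_N$ be the first time at which $\|x(t)\|$, $\int_0^t\|b(s)\|ds$ or $\int_0^t\|\sigma(s)\|^2ds$ reaches $N$. Up to time $\tau_N$ the integrands are bounded and the derivatives of $\mathcal{E}$ are uniformly continuous on the relevant compact set. Since $\tau_N\to\infty$ almost surely, it suffices to prove the identity on $[0,t\wedge\tau_N]$. By density we may also assume that $b$ and $\sigma$ are simple (step) processes, and pass to the limit in probability at the end.

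Take a partition $0=t_0<\dots<t_n=t$ with mesh going to zero, and set $\Delta x_k=x(t_{k+1})-x(t_k)$ and $\Delta t_k=t_{k+1}-t_k$. Taylor's theorem gives
\[\mathcal{E}(x(t),t)-\mathcal{E}(x_0,0)=\sum_k\Big[\partial_t\mathcal{E}\,\Delta t_k+\langle\nabla_1\mathcal{E},\Delta x_k\rangle+\tfrac12\Delta x_k^T\nabla_{11}\mathcal{E}\,\Delta x_k\Big]+R_n,\]
with the derivatives evaluated at $(x(t_k),t_k)$. The remainder $R_n$ is controlled by the modulus of continuity of the second derivatives times $\sum_k(\|\Delta x_k\|^2+\Delta t_k)$. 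That sum stays bounded in probability, so $R_n\to0$.

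\emph{Identifying the limits.} The first sum converges to $\int\partial_t\mathcal{E}\,ds$. Now substitute $\Delta x_k\approx b(t_k)\Delta t_k+\sigma(t_k)\Delta B_k$. The drift part converges to $\int\langle\nabla_1\mathcal{E},b\rangle ds$. The noise part converges, by the definition of the Itô integral as a limit of left-endpoint sums, to $\int\langle\nabla_1\mathcal{E},\sigma\rangle dB$.

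In the quadratic sum, the cross terms $\Delta t_k\Delta B_k$ and the $(\Delta t_k)^2$ terms are $o(1)$ summed. What remains is $\sum_k\tfrac12\sigma^T\nabla_{11}\mathcal{E}\,\sigma\,(\Delta B_k)^2$.

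\emph{Main obstacle.} The key step is to show that
\[\sum_k g_k\big((\Delta B_k)^2-\Delta t_k\big)\to0\quad\text{in }L^2,\]
where $g_k$ is bounded and $\mathcal{F}_{t_k}$-measurable. Expanding the square, the cross terms vanish by independence of increments. Each diagonal term equals $E[g_k^2]\cdot 2\Delta t_k^2$, so the total is $O(\text{mesh})\to0$.

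Consequently the quadratic sum converges to $\int\tfrac12\mathrm{Tr}(\sigma\sigma^T\nabla_{11}\mathcal{E})ds$. Removing the simple-process approximation and the localization gives the stated differential form. The delicate parts are precisely this quadratic-variation replacement $(dB)^2=dt$ and the uniform control of the Taylor remainder, which the localization handles.
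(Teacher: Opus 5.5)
Your argument is correct and is the standard textbook proof of the It\^{o} formula: a second-order Taylor expansion along a partition, left-endpoint sums converging to the It\^{o} integral, and the $L^2$ computation $\mathbb{E}[((\Delta B_k)^2-\Delta t_k)^2]=2\Delta t_k^2$ that justifies $(dB)^2=dt$. The paper gives no proof of its own and simply recalls the lemma from \cite{C16}--\cite{C19}, where essentially this argument appears under the $C^{2,1}$ hypothesis you rightly adopt.

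One housekeeping point: stopping the original $x$ at $\tau_N$ controls neither $b,\sigma$ pointwise nor the simple-process approximants. It is cleaner to localize by replacing $\mathcal{E}$ with a $C^{2,1}$ function that agrees with it on a ball and has compact support in $x$; this makes all derivatives bounded and uniformly continuous, and gives the boundedness of $g_k$ that your $L^2$ step needs.
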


\subsection{Graph theory}
Consider a time-varying directed communication graph $\mathcal{G}(t)=(\mathcal{V},\mathcal{E}(t),\mathcal{A}(t))$, where $\mathcal{V}=\{1,\cdots,n\}$ is a set of vertices, $n$ is the number of agents, $\mathcal{E}(t)\subset \mathcal{V}\times \mathcal{V}$ is an edge set, and $\mathcal{A}(t)=(a_{ij})_{n\times n}$ is a non-negative matrix for adjacency weights of edges such that $a_{ij}(t)>0\Leftrightarrow(j,i)\in \mathcal{E}(t)$ and $a_{ij}(t)=0$ otherwise. Here we assume that $a_{ii}(t)=0$ for any $i\in \mathcal{V}$ and $t\geq0$. $\mathcal{N}_i(t)=\{j\in \mathcal{V}|(j,i)\in \mathcal{E}(t)\}$ represents the neighbor set of agent $i$ at time $t$. $\mathcal{G}(t)$ is said to be balanced if the sum of the interaction weights from and to agent $i$ are equal, i.e., $\sum_{j=1}^na_{ij}(t)=\sum_{j=1}^na_{ji}(t)$. The Laplacian matrix of the graph at time $t$ is defined as $L(t)=(l_{ij}(t))_{n\times n}$, where $l_{ii}(t)=\sum_{j=1}^na_{ij}(t)$ and $l_{ij}(t)=-a_{ij}(t), i\neq j$ for any $i,j\in \mathcal{V}$. For a fixed topology $\mathcal{G}=(\mathcal{V},\mathcal{E},\mathcal{A})$, a path of length $r$ from node $i_1$ to node $i_{r+1}$ is a sequence of $r+1$ distinct nodes $i_1,\cdots,i_{r+1}$ such that $(i_q,i_{q+1})\in \mathcal{E}$ for $q=1,\cdots,r$. If there exists a path between any two nodes, then $\mathcal{G}$ is said to be strongly connected. For time-varying graph $\mathcal{G}(t)$, $(j,i)$ is called a $(\delta,T_c)$-edge if there always exist two positive real numbers $T_c$ and $\delta$ such that $\int_t^{t+T_c}a_{ij}(s)ds\geq\delta$ for any $t>0$. A $(\delta,T_c)$-graph, corresponding to $\mathcal{G}(t)$, is defined as $\mathcal{G}_{(\delta,T_c)}=(\mathcal{V},\mathcal{E}_{(\delta,T_c)})$, where $\mathcal{E}_{(\delta,T_c)}=\{(j,i)\in \mathcal{V}\times \mathcal{V}|\int_t^{t+T_c}a_{ij}(s)ds\geq\delta, \forall t\geq0\}$. Moreover, if the $(\delta,T_c)$-graph $\mathcal{G}_{(\delta,T_c)}$ is strongly connected, we call the time-varying graph $\mathcal{G}(t)$ is $(\delta,T_c)$-strongly connected.

Throughout this paper, we make the following assumption for the communication graph.
\begin{assumption}\label{ASB1}
The digraph $\mathcal{G}(t)$ is $(\delta,T_c)$-strongly connected and balanced.
\end{assumption}

Consider the following consensus model:
\begin{equation}\label{eqb1}
\dot{\chi}(t)=-L(t)\chi(t),
\end{equation}
where $\chi(t)\in \mathbb{R}^m$ and $t\geq0$. Assumption \ref{ASB1} ensures consensus model (\ref{eqb1}) reaches average-consensus \cite{C23}, i.e., $\lim_{t \to \infty}\chi(t)=\frac{\mathbf{1}_n^T\chi(0)\mathbf{1}_n}{n}$. The state transition matrix of model (\ref{eqb1}) is defined as
\begin{equation}\label{eqb2}
\begin{split}
\Phi(t,s)&=I-\int_s^tL(\tau)d\tau\\
&~~~+\int_s^t\int_s^{\tau_1}L(\tau_1)L(\tau_2)d\tau_1d\tau_2+\cdots.
\end{split}
\end{equation}
Obviously, $\Phi(t,s)$ is a stochastic matrix. Based on Lemma 3 in Lu et al. \cite{C11}, there is an important lemma as follows.
\begin{lemma}[\cite{C11}]\label{LEb1}
Under Assumption \ref{ASB1}, for some $\mathcal{C}>0$ and $0<\lambda<1$,
\begin{equation*}
\left|[\Phi(t,s)]_{ij}-\frac{1}{n}\right|\leq \mathcal{C}\lambda^{t-s}, \forall i,j\in\{1,\cdots,n\},
\end{equation*}
where $t\geq s\geq0$.
\end{lemma}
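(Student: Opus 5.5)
The plan is to follow the standard route for consensus over time-varying graphs: obtain a uniform contraction of the disagreement over each window of length $T_c$, and then iterate it. Since $L(t)$ has zero row sums, $\Phi(t,s)\mathbf{1}_n=\mathbf{1}_n$. Balancedness gives $\mathbf{1}_n^TL(t)=0$, so $\mathbf{1}_n^T\Phi(t,s)=\mathbf{1}_n^T$. Nonnegativity of the entries of $\Phi(t,s)$ follows from writing $L(t)=D(t)-\mathcal{A}(t)$ and using the Peano--Baker series of $\dot\chi=-(D-\mathcal{A})\chi$ after factoring out the diagonal part. Hence $\Phi(t,s)$ is doubly stochastic. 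The target bound is then equivalent to showing $\|\Phi(t,s)-\frac1n\mathbf{1}_n\mathbf{1}_n^T\|_{\max}\le\mathcal{C}\lambda^{t-s}$, i.e., that each column $\Phi(t,s)e_j$ converges exponentially to $\frac1n\mathbf{1}_n$.

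For the contraction step I would use the Lyapunov function $V(\chi)=\|\chi-\bar\chi\mathbf{1}_n\|^2$, where $\bar\chi=\frac1n\mathbf{1}_n^T\chi$. Balancedness makes $\bar\chi$ invariant along (\ref{eqb1}), and gives $\dot V=-\chi^T(L+L^T)\chi=-\sum_{i,j}a_{ij}(t)(\chi_i-\chi_j)^2$, because $\frac12(L+L^T)$ is the Laplacian of the undirected graph with weights $\frac{a_{ij}+a_{ji}}2$. The main claim is that there is $\rho\in(0,1)$, depending only on $n,\delta,T_c$ and a bound on the weights, such that $V(t+kT_c)\le\rho V(t)$ for a fixed integer $k$ (for instance $k=n-1$ or $k=1$).

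This claim is where I expect the main difficulty. I would argue by contradiction with a quantitative estimate. Suppose $V$ decreases by less than $\epsilon V(t)$ over the window. Then $\int\sum a_{ij}(\chi_i-\chi_j)^2$ is small. This forces $|\chi_i(s)-\chi_j(s)|$ to be small, relative to $\sqrt{V(t)}$, for some time along every $(\delta,T_c)$-edge. To transfer this to the whole window I need a bound on how fast $\chi$ moves, namely $|\dot\chi_i|\le \bar a\,\max|\chi|$, and this requires assuming $a_{ij}(t)\le\bar a$ (an implicit standing assumption that I would state explicitly). Chaining along a spanning tree of the strongly connected $(\delta,T_c)$-graph then gives $\max_{i,j}|\chi_i-\chi_j|$ small, which contradicts $V(t)$ being comparable to $\max_{i,j}|\chi_i-\chi_j|^2$. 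As an alternative to this contradiction argument, I would simply invoke Lemma 3 of \cite{C11} or the average-consensus result of \cite{C23}, which are exactly this statement.

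Once the contraction $V(s+kT_c)\le\rho V(s)$ holds and $V$ is nonincreasing, I get $V(t)\le\rho^{\lfloor (t-s)/(kT_c)\rfloor}V(s)\le\rho^{-1}(\rho^{1/(kT_c)})^{t-s}V(s)$. Applying this with initial condition $\chi(s)=e_j$, for which $V(s)=1-\frac1n$, yields $\left|[\Phi(t,s)]_{ij}-\frac1n\right|\le\sqrt{V(t)}\le\rho^{-1/2}\bigl(\rho^{1/(2kT_c)}\bigr)^{t-s}$. This gives the statement with $\lambda=\rho^{1/(2kT_c)}$ and $\mathcal{C}=\rho^{-1/2}$.
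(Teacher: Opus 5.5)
The paper gives no proof of this lemma. It imports it verbatim as Lemma 3 of \cite{C11}, so your fallback of citing that result is exactly the paper's route. The rest of your proposal is a self-contained proof attempt, and its outer layers are sound. Double stochasticity, the identity $\dot V=-\sum_{i,j}a_{ij}(\chi_i-\chi_j)^2$ for balanced $L(t)$, $V(s)=1-\frac1n$ for $\chi(s)=e_j$, and the iteration giving $\lambda=\rho^{1/(2kT_c)}$, $\mathcal{C}=\rho^{-1/2}$ are all correct. Taking $k=1$ already suffices, because every $(\delta,T_c)$-edge carries weight at least $\delta$ in every window of length $T_c$. Stating the bound $a_{ij}(t)\le\bar a$ explicitly is appropriate: it is the usual standing assumption behind such lemmas, and the bare citation hides it.

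The gap is in the contraction step, specifically the transfer from "small at some time $s_{ij}$ for each edge" to "small at a common time." The speed bound you propose, $|\dot\chi_i|\le\bar a\max|\chi|$ (precisely, $|\dot\chi_i|\le 2(n-1)\bar a\sqrt{V(t)}$ on the window), only bounds the displacement of $\chi_i$ over the window by $2(n-1)\bar aT_c\sqrt{V(t)}$. That is of the same order as $\sqrt{V(t)}$, since $\bar aT_c\ge\delta$, rather than small relative to it. So the edge-wise smallness at different times cannot be chained along a spanning tree, and no contradiction follows.

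The missing idea is to control the motion by the dissipation itself. Integrate $|\dot\chi_i|\le\sum_j a_{ij}|\chi_j-\chi_i|$ over the window and apply Cauchy--Schwarz:
$\int_t^{t+T_c}|\dot\chi_i|\,ds\le\bigl((n-1)\bar aT_c\bigr)^{1/2}\bigl(V(t)-V(t+T_c)\bigr)^{1/2}$.
Under your hypothesis this is at most $\sqrt{(n-1)\bar aT_c\,\epsilon V(t)}$. Then at time $t+T_c$ every edge difference is at most $c\sqrt{\epsilon V(t)}$, with $c=\delta^{-1/2}+2\sqrt{(n-1)\bar aT_c}$. Chaining over at most $n-1$ edges gives $V(t+T_c)\le n(n-1)^2c^2\epsilon V(t)$. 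Hence $V(t+T_c)\le\max\{1-\epsilon,\,n(n-1)^2c^2\epsilon\}V(t)$, which is a genuine contraction once $\epsilon$ is small. No contradiction argument is needed, and $\rho$ depends only on $n,\delta,T_c,\bar a$.

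Your crude speed bound would only suffice inside a non-quantitative compactness argument, using Arzel\`a--Ascoli for the trajectories and weak-$*$ limits of the weights. That is not what you describe.
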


\section{Problem formulation}
Consider a multi-agent system (MAS) consisting of $n$ agents, labeled by the set $\mathcal{V}=\{1,\cdots,n\}$, where each agent exchanges local state information with its neighbours through a time-varying directed graph $\mathcal{G}(t)$. The objective of agents is to cooperatively solve the following optimization problem
\begin{equation}\label{eqc1}
\min_{{x}\in \mathbb{R}^m} f({x}),~f({x})=\sum_{i=1}^n f_i({x})
\end{equation}
where $f_i:\mathbb{R}^m\rightarrow \mathbb{R}$ is twice continuously differentiable and convex. When making decisions, each agent $i$ only has access to a noisy gradient of $f_i$, denoted by $\nabla \tilde{f}_i(x_i(t))$, rather than its real gradient. The following assumptions on the objective functions are adopted throughout this paper.
\begin{assumption}\label{ASC1}
The gradient $\nabla f_i(x)$ is bounded for any $i\in \mathcal{V}$, i.e., there exists some constant $M>0$ such that
$$\|\nabla f_i(x)\|\leq M.$$
\end{assumption}
\begin{assumption}\label{ASC2}
Each function $f_i$ is $L$-smooth, i.e., there exists some constant $L>0$ such that
$$\|\nabla f_i(x)-\nabla f_i(y)\|\leq L\|x-y\|.$$
\end{assumption}

To solve the problem (\ref{eqc1}), we consider the following continuous-time distributed stochastic gradient algorithm:
\begin{equation}\label{eq3_1}
dx_i(t)=\sum_{j=1}^na_{ij}(t)(x_j(t)-x_i(t))dt-\eta_t\nabla \tilde{f}_i(x_i(t)),
\end{equation}
where $x_i(t)\in \mathbb{R}^m$ represents the state of agent $i$ with initial value $x_i(0)$, $\nabla \tilde{f}_i(x_i(t))=\nabla f_i(x_i(t))dt+g_i(t)dB_i(t)$, $B_i(t)$ is a Brownian motion, $g_i(t)\in \mathbb{R}^m$ is the noise intensity, and $\eta_t>0$ is a non-increasing step size.

\begin{remark}\label{REH1}
The stochastic differential equation (\ref{eq3_1}) can be viewed as the continuous-time counterpart of a discrete-time distributed stochastic gradient descent algorithm with time-varying noise intensity. In particular, applying an Euler discretization yields the update
\begin{equation}\label{eq3_2}
\begin{split}
x_i^{k+1}&=x_i^k+\sum_{j=1}^n(a_{ij}(k)(x_j^k-x_i^k))h\\
&~~~-\eta_{k}h(\nabla f_i(x_i^k)+\frac{g_i(k)\xi_i^k}{\sqrt{h}}),
\end{split}
\end{equation}
where $h>0$ is a step size. Let $\xi_i^k\triangleq B_i(k+1)-B_i(k)\sim (0, h_0\delta^2)$ and $\hat{\nabla} f_i(x_i^k)=\nabla f_i(x_i^k)+\frac{g_i(k)\xi_i^k}{\sqrt{h}}$ for some $h_0> 0$. It is easy to verify that if $g_i(k)=\frac{\mathbf{1}_m}{\sqrt{m}}$ and $h_0\leq h$, then the Gaussian model can be achieved as $\mathbb{E}[\hat{\nabla} f_i(x_i^k)-\nabla f_i(x_i^k)]=\mathbb{E}\left[\frac{g_i(k)\xi_i^k}{\sqrt{h}}\right]=\mathbf{0}_m$ and $\mathbb{E}[\|\hat{\nabla} f_i(x_i^k)-\nabla f_i(x_i^k)\|^2]=\mathbb{E}\left[\left\|\frac{g_i(k)\xi_i^k}{\sqrt{h}}\right\|^2\right]\leq\delta^2$.
The two conditions are commonly adopted in the study of discrete-time distributed stochastic optimization \cite{C5}, \cite{A3}. Consequently, (\ref{eq3_1}) is the continuous-time counterpart of the discrete-time model in (\ref{eq3_2}). Studying the convergence of (\ref{eq3_1}) needs some powerful analysis tools for stochastic differential equations due to the influence of the Brownian motion $B(t)$. To the best of our knowledge, this is the first time to study the continuous-time distributed stochastic optimization algorithm.
\end{remark}

To ensure the convergence of algorithm (\ref{eq3_1}), the following assumption on the noise intensity is necessary.
\begin{assumption}\label{AS3-1}
For any $i\in \mathcal{V}$, $\|g_i(t)\|\leq K$ for some $K>0$.
\end{assumption}

In the next section, we will present the main result of this paper and provide its proof.

\section{Main results}

Let us start this section by stating our main result in the following theorem.
\begin{theorem}\label{TH1}
Under Assumptions~\ref{ASB1}--\ref{AS3-1}, if
$\eta_s=\frac{\beta}{(s+1)^a}$, then for some $\frac{1}{2}<a\le1$ and $\beta>0$, by algorithm (\ref{eq3_1}), for any $i\in\mathcal{V}$,
\begin{equation}\label{eqco1}
\mathbb{E}[f(x_i(t))-f(x^*)]\le
\begin{cases}
\mathcal{O}\left(t^{-(a-\frac{1}{2})}\right), & \frac{1}{2}<a<\frac{3}{4},\\
\mathcal{O}\left(\dfrac{\sqrt{\ln t}}{t^{1/4}}\right), & a=\frac{3}{4},\\
\mathcal{O}\left(t^{-(1-a)}\right), & \frac{3}{4}<a<1,\\
\mathcal{O}\left((\ln t)^{-1}\right), & a=1,
\end{cases}
\end{equation}
where $x^*\in \mathbb{R}^m$ is a minimizer of (\ref{eqc1}).
\end{theorem}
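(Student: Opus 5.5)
The approach is the standard two-part analysis: a consensus estimate and an optimality estimate for the network average, joined by a Lyapunov argument using Lemma~\ref{LEIT1}.

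\textbf{Average dynamics.} Let $\bar x(t)=\frac1n\sum_{i=1}^n x_i(t)$. Because $\mathcal{G}(t)$ is balanced, $\mathbf{1}_n^TL(t)=0$. The average therefore obeys
\[
d\bar x=-\frac{\eta_t}{n}\sum_{i}\nabla f_i(x_i)\,dt-\frac{\eta_t}{n}\sum_i g_i\,dB_i .
\]

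\textbf{Step 1: consensus error.} Write the stacked state $x=\mathrm{col}(x_i)$ in variation-of-constants form with the transition matrix $\Phi(t,s)$ of (\ref{eqb2}):
\[
x(t)=(\Phi(t,0)\otimes I_m)x(0)-\int_0^t \eta_s(\Phi(t,s)\otimes I_m)\big(\nabla F(x(s))\,ds+G(s)\,dB(s)\big).
\]
Subtract the same expression with $\Phi$ replaced by $\frac1n\mathbf{1}\mathbf{1}^T$, and bound each piece using Lemma~\ref{LEb1}:
\begin{itemize}
\item The initial-condition term decays like $\lambda^t$.
\item The drift term is at most $nM\mathcal{C}\int_0^t\lambda^{t-s}\eta_s\,ds=\mathcal{O}(\eta_t)$, by Assumption~\ref{ASC1}.
\item For the stochastic integral, use the Itô isometry together with Assumption~\ref{AS3-1}. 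This gives $\mathbb{E}\|\cdot\|^2\le c K^2\int_0^t\lambda^{2(t-s)}\eta_s^2\,ds=\mathcal{O}(\eta_t^2)$.
\end{itemize}
Jensen's inequality then yields $\mathbb{E}\|x_i(t)-\bar x(t)\|=\mathcal{O}(\eta_t)$, up to an exponentially small term.

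\textbf{Step 2: Lyapunov bound for the average.} Apply Lemma~\ref{LEIT1} to $V=\|\bar x-x^*\|^2$. The Itô correction is at most $\frac{\eta_t^2}{n^2}\sum_i\|g_i\|^2\le \frac{K^2\eta_t^2}{n}$, and the martingale term has zero expectation. For the cross term, write
\[
\langle \bar x-x^*,\nabla f_i(x_i)\rangle=\langle \bar x-x_i,\nabla f_i(x_i)\rangle+\langle x_i-x^*,\nabla f_i(x_i)\rangle .
\]
Convexity gives $\langle x_i-x^*,\nabla f_i(x_i)\rangle\ge f_i(x_i)-f_i(x^*)$. Assumption~\ref{ASC1} gives $|\langle \bar x-x_i,\nabla f_i(x_i)\rangle|\le M\|\bar x-x_i\|$. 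Moving from $f_i(x_i)$ to $f_i(\bar x)$ costs a further $M\|x_i-\bar x\|$. Together these give
\[
\frac{d}{dt}\mathbb{E}V\le -\frac{2\eta_t}{n}\mathbb{E}[f(\bar x)-f(x^*)]+c_1\eta_t\,\mathbb{E}\sum_i\|x_i-\bar x\|+c_2\eta_t^2 .
\]

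\textbf{Step 3: rate.} Integrate over $[0,t]$ and insert Step 1:
\[
\int_0^t\eta_s\,\mathbb{E}[f(\bar x(s))-f(x^*)]\,ds\le C\Big(1+\int_0^t\eta_s^2\,ds\Big).
\]
To turn this into a pointwise bound, I would try one of two routes:
\begin{itemize}
\item a time-weighted average $\hat x(t)=\int_0^t\eta_s\bar x(s)\,ds/\int_0^t\eta_s\,ds$, using Jensen's inequality; or
\item since the theorem is stated for $x_i(t)$ itself, a direct monotonicity-type argument on $\frac{d}{dt}\mathbb{E}f(\bar x)$, obtained by applying Itô to $f$ with the $L$-smoothness of Assumption~\ref{ASC2}. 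Its Hessian correction is $\le LK^2\eta_t^2/n$.
\end{itemize}
One then passes from $\bar x$ to $x_i$ by $\mathbb{E}|f(x_i)-f(\bar x)|\le nM\,\mathbb{E}\|x_i-\bar x\|=\mathcal{O}(\eta_t)$.

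The rates come from the ratio $\big(1+\int_0^t\eta_s^2\,ds\big)/\int_0^t\eta_s\,ds$, possibly combined with square-root terms from the noise (Cauchy–Schwarz on the martingale part, which explains the $a-\frac12$ exponents). The regimes are:
\begin{itemize}
\item For $a<\frac34$, the square-root noise contributions of order $t^{-(a-1/2)}$ dominate.
\item For $a>\frac34$, the term $1/\int\eta_s\sim t^{-(1-a)}$ dominates.
\item At $a=\frac34$ the two balance, with a logarithm appearing, which gives $\sqrt{\ln t}/t^{1/4}$.
\item At $a=1$, $\int\eta_s\sim\ln t$.
\end{itemize}

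\textbf{Main obstacle.} The hardest step is converting the integrated (ergodic) bound into a last-iterate bound on $\mathbb{E}f(x_i(t))$ with exactly these exponents. I would first try a suffix-averaging argument comparing $\mathbb{E}f(\bar x(t))$ with its average over $[t/2,t]$, controlling the increments via the Itô estimate of $d f(\bar x)$ and Cauchy–Schwarz. That is where the $\sqrt{\cdot}$ loss and the threshold $a=\frac34$ should appear.
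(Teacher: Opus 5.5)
\textbf{There is a genuine gap.} Your Steps 1--3 are sound. Your pointwise estimate $\mathbb{E}\|x_i(t)-\bar x(t)\|=\mathcal{O}(\lambda^t+\eta_t)$ is actually sharper than what the paper feeds into its final bound.

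The gap is the step you flag yourself: nothing in the proposal turns the ergodic bound $\int_0^t\eta_s\mathbb{E}[f(\bar x(s))-f(x^*)]\,ds\le C$ into a bound on $\mathbb{E}[f(x_i(t))-f(x^*)]$. Route (i) bounds $f$ at a weighted average, which is not the statement. Route (ii) names the right ingredient but does not say how to use it. The plan you settle on (suffix averaging with Cauchy--Schwarz on the increments, expected to produce a square-root loss) targets the wrong mechanism. The It\^o integral has zero mean, so no square-root loss arises from it. Moreover, a two-sided control of the increments using only $\|\nabla f\|\le nM$ costs $\int_{t/2}^t\eta_s\,ds$, which does not decay.

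What closes the argument is the \emph{one-sided} estimate from It\^o on $f(\bar x)$, after discarding $-\frac{\eta_t}{n}\|\nabla f(\bar x)\|^2$:
\[
\mathbb{E}f(\bar x(t))\le\mathbb{E}f(\bar x(s))+\int_s^t r(\tau)\,d\tau,\qquad r(\tau)=c\,\eta_\tau\sum_{i}\mathbb{E}\|x_i(\tau)-\bar x(\tau)\|+c\,\eta_\tau^2 .
\]
Multiply this by $\eta_s$ and integrate over $s\in[0,t]$:
\[
\varphi(t)\,\mathbb{E}[f(\bar x(t))-f(x^*)]\le\int_0^t\eta_s\,\mathbb{E}[f(\bar x(s))-f(x^*)]\,ds+\int_0^t\varphi(\tau)r(\tau)\,d\tau,\qquad\varphi(t)=\int_0^t\eta_s\,ds .
\]
This is exactly what the paper packages into its single Lyapunov function $V=\frac1n\varphi(t)(f(\bar x)-f(x^*))+\frac12\|\bar x-x^*\|^2$. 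Differentiating $\varphi$ produces $\frac{\eta_t}{n}(f(\bar x)-f(x^*))$, which cancels the convexity term from the quadratic part. The $\varphi(t)\langle\nabla f(\bar x),\cdot\rangle$ term is your near-monotonicity. Dividing by $\varphi(t)$ then gives a last-iterate bound with no loss. Suffix averaging over $[t/2,t]$ with the same one-sided estimate also works.

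Your account of the rates also needs correcting. The governing ratio for the last iterate is $\bigl(1+\int_0^t\varphi(s)\eta_s^2\,ds\bigr)/\varphi(t)$. The ratio $\bigl(1+\int_0^t\eta_s^2\,ds\bigr)/\int_0^t\eta_s\,ds$ you cite is only the ergodic rate.

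The exponent $a-\frac12$ and the threshold $a=\frac34$ in the paper do not come from a martingale. They come from Lemma~\ref{LE2}, where $\mathbb{E}\int_0^t\varphi(s)\eta_s\|\Psi(s)\|\,ds$ is bounded by Cauchy--Schwarz in time. That produces the term $\frac{1}{\varphi(t)}\bigl(\int_0^t\varphi(s)^2\eta_s^2\,ds\int_0^t\eta_s^2\,ds\bigr)^{1/2}\sim t^{\frac12-a}$ for $a<\frac34$.

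With your pointwise consensus bound, $\int_0^t\varphi(\tau)r(\tau)\,d\tau=\mathcal{O}\bigl(1+\int_0^t\varphi\eta^2\,ds\bigr)$. This gives:
\begin{itemize}
\item $\mathcal{O}(t^{1-2a})$ for $\frac12<a<\frac23$;
\item $\mathcal{O}(t^{-1/3}\ln t)$ at $a=\frac23$;
\item $\mathcal{O}(t^{-(1-a)})$ for $\frac23<a<1$;
\item $\mathcal{O}((\ln t)^{-1})$ at $a=1$.
\end{itemize}
Adding $nM\,\mathbb{E}\|x_i-\bar x\|=\mathcal{O}(t^{-a})$, each of these is at most the bound claimed in (\ref{eqco1}) for the same $a$. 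So the theorem follows, and you should not try to reproduce its exponents exactly.

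Integrating pointwise bounds has a side benefit. You never need $\varphi\eta$ to be non-increasing, which Lemmas~\ref{LE1}--\ref{LE2} require of $\phi$. That property in fact fails near $s=0$, since $\varphi(0)\eta_0=0<\varphi(s)\eta_s$ for $s>0$.
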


The result in Theorem \ref{TH1} shows that, under algorithm (\ref{eq3_1}), the states of the agents asymptotically converge to a common minimizer of (\ref{eqc1}) in expectation. Moreover, the convergence rate depends on the decay exponent $a$ of the step size $\eta_t$. It is obvious that the fastest polynomial convergence rate is achieved if $a=\frac{3}{4}$. Before proving Theorem \ref{TH1}, some necessary lemmas are provided.
\begin{lemma}[\cite{C16}]\label{LEE1}
If $p\geq 2$, $g(t)\in \mathcal{M}^p([0,t],\mathbb{R}^{d\times m})$, then
\begin{equation*}
\begin{split}
&\mathbb{E}\left[\left\|\int_0^tg(s)dB(s)\right\|^p\right]\\
&\leq\left(\frac{p(p-1)}{2}\right)^\frac{p}{2}t^\frac{p-2}{2}\mathbb{E}\left[\int_0^t\|g(s)\|^pds\right].
\end{split}
\end{equation*}
In particular, for $p=2$, the equation holds.
\end{lemma}
\begin{lemma}\label{LE3}
For any $a>0$ and $0<\lambda<1$,
\begin{equation*}\label{eq112}
\int_0^t\frac{\lambda^{-s}}{(s+1)^a}ds\leq\frac{\delta_1\lambda^{-t}}{(t+1)^a}+\frac{\delta_2}{(t+1)^a}+\delta_3,
\end{equation*}
where $\delta_1=\frac{t_0+1}{\ln\lambda^{-1}(t_0+1)-a}$, $\delta_2=\lambda^{-t_0}t_0(t_0+1)^a$, $\delta_3=\frac{a\lambda^{-t_0}t_0(t_0+1)}{\ln\lambda^{-1}(t_0+1)-a}$ and $t_0>\max(\frac{a}{\ln\lambda^{-1}}-1,0)$.
\end{lemma}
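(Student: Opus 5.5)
We need to prove Lemma \ref{LE3}: $\int_0^t \lambda^{-s}(s+1)^{-a}ds \le \delta_1\lambda^{-t}(t+1)^{-a}+\delta_2(t+1)^{-a}+\delta_3$. Throughout write $c=\ln\lambda^{-1}>0$, so $\lambda^{-s}=e^{cs}$. The plan is to split the integral at the point $t_0$ from the statement. On $[0,t_0]$ the integrand is crudely bounded. On $[t_0,t]$ we integrate by parts against $e^{cs}$; the choice of $t_0$ makes the exponential growth dominate the polynomial decay, so the boundary term is the main contribution.

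\textbf{Case $t\le t_0$.} The integrand is at most $\lambda^{-s}\le\lambda^{-t_0}$, so the integral is at most $t\lambda^{-t_0}\le t_0\lambda^{-t_0}$. Since $t+1\le t_0+1$, we have
\[
t_0\lambda^{-t_0}\le \frac{\lambda^{-t_0}t_0(t_0+1)^a}{(t+1)^a}=\frac{\delta_2}{(t+1)^a}.
\]
The $\delta_1$ and $\delta_3$ terms are nonnegative, because $t_0>\frac{a}{c}-1$ gives $c(t_0+1)-a>0$. So the bound holds in this case.

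\textbf{Case $t>t_0$.} Split $\int_0^t=\int_0^{t_0}+\int_{t_0}^t$.

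The first piece is at most $t_0\lambda^{-t_0}$. Attaching the factor $(t_0+1)^a$ to it gives $\delta_2$, but dividing by $(t+1)^a$ would then be illegitimate because $t>t_0$. So this piece has to go into the constant $\delta_3$ instead, possibly combined with a boundary term from the second piece.

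For the second piece, integrate by parts with $u=(s+1)^{-a}$ and $dv=e^{cs}ds$:
\[
\int_{t_0}^t\frac{e^{cs}}{(s+1)^a}ds=\frac{e^{ct}}{c(t+1)^a}-\frac{e^{ct_0}}{c(t_0+1)^a}+\frac{a}{c}\int_{t_0}^t\frac{e^{cs}}{(s+1)^{a+1}}ds .
\]
For $s\ge t_0$ we have $(s+1)^{-a-1}\le (t_0+1)^{-1}(s+1)^{-a}$. Substituting this, the last integral is at most $\frac{a}{c(t_0+1)}$ times the left-hand side. Absorbing it gives
\[
\left(1-\frac{a}{c(t_0+1)}\right)\int_{t_0}^t\frac{e^{cs}}{(s+1)^a}ds\le\frac{e^{ct}}{c(t+1)^a}.
\]
Here we dropped the negative boundary term. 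The coefficient on the left is positive by the choice of $t_0$. Multiplying through by $\frac{c(t_0+1)}{c(t_0+1)-a}$, the second piece is at most
\[
\frac{t_0+1}{c(t_0+1)-a}\cdot\frac{\lambda^{-t}}{(t+1)^a}=\frac{\delta_1\lambda^{-t}}{(t+1)^a}.
\]

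\textbf{Main obstacle: reconciling the first piece with the stated constants.} The first piece $t_0\lambda^{-t_0}$ must be shown to be at most $\delta_3=\frac{a\lambda^{-t_0}t_0(t_0+1)}{c(t_0+1)-a}$. This holds exactly when $c(t_0+1)-a\le a(t_0+1)$. That inequality is not automatic, so I would not rely on it. Instead I would use the slack available in $\delta_2(t+1)^{-a}$ for $t$ near $t_0$, or keep the negative boundary term $-\frac{e^{ct_0}}{c(t_0+1)^a}$ from the integration by parts, and check whether either covers the gap.

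If neither works, I would accept proving the inequality with constants of the same form. The downstream use in Theorem \ref{TH1} only needs the structure $\mathcal{O}(\lambda^{-t}(t+1)^{-a})+\mathcal{O}(1)$.

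Every step is elementary apart from this check on the constants.
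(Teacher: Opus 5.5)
You have the right technique, integration by parts with absorption, but the proposal does not prove the lemma with the stated constants. Falling back to ``constants of the same form'' proves a weaker statement. Your case $t\le t_0$ and your estimate $\int_{t_0}^t\lambda^{-s}(s+1)^{-a}ds\le\delta_1\lambda^{-t}(t+1)^{-a}$ are both correct. The trouble is that you split the original integral at $t_0$ and bound $\int_0^{t_0}$ crudely by $t_0\lambda^{-t_0}$. This loses the factor $a/c$ that $\delta_3$ carries, and $t_0\lambda^{-t_0}\le\delta_3$ holds only when $(c-a)(t_0+1)\le a$.

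Neither of your proposed repairs closes this gap in general. The term $\delta_2(t+1)^{-a}$ tends to $0$, so it cannot absorb a constant deficit for large $t$. Keeping the boundary term $-\frac{t_0+1}{c(t_0+1)-a}\cdot\frac{\lambda^{-t_0}}{(t_0+1)^a}$ together with the crude bound also fails. For example, with $a=1$, $\lambda=e^{-10}$, $t_0=1$, your resulting bound exceeds the stated right-hand side by $\bigl(\frac{16}{19}-\frac{2}{t+1}\bigr)\lambda^{-1}>0$ for all $t>\frac{11}{8}$.

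The missing idea, which is what the paper's proof does, is to integrate by parts over the whole interval $[0,t]$ \emph{before} splitting at $t_0$. Then $[0,t_0]$ enters only through the remainder integral and inherits the factor $a/c$. With $c=\ln\lambda^{-1}$ as in your notation, and dropping the boundary term $-1/c$, for $t\ge t_0$ one has
\begin{align*}
\int_0^t\frac{\lambda^{-s}}{(s+1)^a}ds
&\le\frac{\lambda^{-t}}{c(t+1)^a}+\frac{a}{c}\int_0^{t_0}\frac{\lambda^{-s}}{(s+1)^{a+1}}ds\\
&\quad+\frac{a}{c}\int_{t_0}^{t}\frac{\lambda^{-s}}{(s+1)^{a+1}}ds\\
&\le\frac{\lambda^{-t}}{c(t+1)^a}+\frac{a t_0\lambda^{-t_0}}{c}\\
&\quad+\frac{a}{c(t_0+1)}\int_0^t\frac{\lambda^{-s}}{(s+1)^a}ds.
\end{align*}
Absorbing the last term and multiplying by $\frac{c(t_0+1)}{c(t_0+1)-a}$ gives exactly $\delta_1\lambda^{-t}(t+1)^{-a}+\delta_3$.

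You can also repair it within your own split. Bound $\int_0^{t_0}$ by its own integration by parts, giving $\frac{\lambda^{-t_0}}{c(t_0+1)^a}+\frac{a}{c}t_0\lambda^{-t_0}$. Then keep your negative boundary term: its magnitude $\frac{t_0+1}{c(t_0+1)-a}\cdot\frac{\lambda^{-t_0}}{(t_0+1)^a}$ dominates $\frac{\lambda^{-t_0}}{c(t_0+1)^a}$. What remains is $\frac{a}{c}t_0\lambda^{-t_0}\le\delta_3$.
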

\begin{proof}
See APPENDIX. \emph{A}.
\end{proof}

Then the bounds on $\|x_i(t)-\bar{x}(t)\|$ and $\int_0^t\phi(s)\|x_i(s)-\bar{x}(s)\|ds$ are provided, refer to the following lemma for details.
\begin{lemma}\label{LE1}
Under Assumptions \ref{ASB1} and \ref{ASC1}, if $\phi(t)>0$ is a non-increasing with respect to $t$, then by algorithm (\ref{eq3_1}), for any $t\geq0$ and $i\in\mathcal{V}$,
\begin{equation}\label{eq3-1}
\|x_i(t)-\bar{x}(t)\|\leq \theta_1\lambda^{t}+\theta_2\int_0^t\lambda^{t-s}\eta_s ds+\|\Psi(t)\|
\end{equation}
and
\begin{equation}\label{eq3-2}
\begin{split}
&\int_0^t\phi(s)\|x_i(s)-\bar{x}(s)\|ds\\
&\leq \frac{\theta_1\phi(0)}{\ln\lambda^{-1}}(1-\lambda^t)+\frac{\theta_2}{\ln\lambda^{-1}}(1-\lambda^t)\int_0^t\phi(s)\eta_sds\\
&~~~+\int_0^t\phi(s)\|\Psi(s)\|ds.
\end{split}
\end{equation}
where $\Psi(t)=\int_0^t\left(\left([\Phi(t,\tau)]_{i\cdot}-\frac{\mathbf{1}_n^T}{n}\right)\otimes I_m\right)\eta_\tau g(\tau)dB(\tau)$,
$\theta_1=\sqrt{nm}\mathcal{C}\|{x}(0)\|$, $\theta_2=\sqrt{m}n^{\frac{3}{2}}\mathcal{C}M$, $\bar{x}(t)=\frac{1}{n}\sum_{i=1}^nx_i(t)$,
and $\mathcal{C}$, $\lambda$ are defined in Lemma \ref{LEb1}.
\end{lemma}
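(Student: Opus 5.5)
Write the algorithm in stacked form. Let $x(t)=\mathrm{col}(x_i(t))_{i\in\mathcal{V}}$, $\nabla F(x)=\mathrm{col}(\nabla f_i(x_i))_{i\in\mathcal{V}}$, and let $g(t)\,dB(t)$ collect the noise terms $g_i(t)dB_i(t)$. Then
\[
dx(t)=-(L(t)\otimes I_m)x(t)\,dt-\eta_t\nabla F(x(t))\,dt-\eta_t g(t)\,dB(t).
\]
This is a linear time-varying SDE with additive inputs, so the variation-of-constants formula with the transition matrix $\Phi(t,s)$ from (\ref{eqb2}) gives
\[
x(t)=(\Phi(t,0)\otimes I_m)x(0)-\int_0^t(\Phi(t,\tau)\otimes I_m)\eta_\tau\nabla F(x(\tau))\,d\tau-\int_0^t(\Phi(t,\tau)\otimes I_m)\eta_\tau g(\tau)\,dB(\tau).
\]
This formula is justified by applying It\^{o}'s formula (Lemma \ref{LEIT1}) to $(\Phi(t,0)^{-1}\otimes I_m)x(t)$; the drift part is linear, so no second-order correction appears.

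Since the graph is balanced, $\mathbf{1}_n^T L(t)=0$, so averaging gives $\bar x(t)=\frac{1}{n}(\mathbf{1}_n^T\otimes I_m)x(t)$. Equivalently, $\bar x(t)$ is obtained from the same formula with $\Phi(t,\tau)$ replaced by $\frac{1}{n}\mathbf{1}_n\mathbf{1}_n^T$. Subtracting, the $i$-th block of $x(t)-\mathbf{1}_n\otimes\bar x(t)$ is
\[
x_i(t)-\bar x(t)=\Big(\big([\Phi(t,0)]_{i\cdot}-\tfrac{\mathbf{1}_n^T}{n}\big)\otimes I_m\Big)x(0)-\int_0^t\Big(\big([\Phi(t,\tau)]_{i\cdot}-\tfrac{\mathbf{1}_n^T}{n}\big)\otimes I_m\Big)\eta_\tau\nabla F(x(\tau))\,d\tau-\Psi(t),
\]
up to the sign convention of $\Psi$, which is irrelevant after taking norms. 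Apply the triangle inequality and Lemma \ref{LEb1} entrywise.

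\emph{First term.} The row vector $[\Phi(t,0)]_{i\cdot}-\frac{\mathbf{1}_n^T}{n}$ has entries bounded by $\mathcal{C}\lambda^t$. Its Kronecker product with $I_m$ therefore has Frobenius norm at most $\sqrt{nm}\,\mathcal{C}\lambda^t$. This yields $\theta_1\lambda^t$.

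\emph{Second term.} By Assumption \ref{ASC1}, $\|\nabla F(x)\|\le\sqrt{n}M$. The same Frobenius bound gives at most $\sqrt{nm}\,\mathcal{C}\lambda^{t-\tau}\sqrt{n}M\eta_\tau$. This is within the stated $\theta_2=\sqrt m n^{3/2}\mathcal{C}M$; the extra factor of $n$ is slack, and one could also bound the sum over $j$ crudely by $n$ times the maximum. This yields (\ref{eq3-1}).

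For (\ref{eq3-2}), multiply (\ref{eq3-1}) by $\phi(s)$ and integrate over $[0,t]$.

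\emph{First term.} Since $\phi(s)\le\phi(0)$, we get $\theta_1\phi(0)\int_0^t\lambda^s\,ds=\frac{\theta_1\phi(0)}{\ln\lambda^{-1}}(1-\lambda^t)$.

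\emph{Second term.} Exchange the order of integration:
\[
\int_0^t\phi(s)\int_0^s\lambda^{s-\tau}\eta_\tau\,d\tau\,ds=\int_0^t\eta_\tau\int_\tau^t\phi(s)\lambda^{s-\tau}\,ds\,d\tau.
\]
Because $\phi$ is non-increasing, $\phi(s)\le\phi(\tau)$ for $s\ge\tau$. Hence the inner integral is at most $\phi(\tau)\frac{1-\lambda^{t-\tau}}{\ln\lambda^{-1}}$.

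The main obstacle is obtaining the factor $(1-\lambda^t)$ exactly as stated, rather than $(1-\lambda^{t-\tau})$. Note that $1-\lambda^{t-\tau}\ge1-\lambda^t$, so the inequality runs in the unfavorable direction. I would first simply bound by $\frac{1}{\ln\lambda^{-1}}\int_0^t\phi\eta$. If the exact factor is required, I would absorb the discrepancy by adjusting constants, or argue through the alternative ordering used in the paper's appendix. This step is harmless for the later use in Theorem \ref{TH1}, where only the order of growth matters.

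The noise term contributes $\int_0^t\phi(s)\|\Psi(s)\|\,ds$ verbatim, to be controlled later in expectation via Lemma \ref{LEE1}.
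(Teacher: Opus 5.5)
Your derivation of (\ref{eq3-1}) is the same as the paper's: variation of constants with $\Phi$, double stochasticity of $\Phi$ to express $\bar x$, and entrywise use of Lemma \ref{LEb1}. Your gradient-term constant $\sqrt m\,n\,\mathcal{C}M$ is even tighter than $\theta_2$. One small correction: no It\^{o} correction appears because the map $x\mapsto(\Phi(t,0)^{-1}\otimes I_m)x$ is linear in $x$, so its Hessian vanishes. It is not because the drift is linear; the drift contains $\eta_t\nabla F(x)$, which is nonlinear.

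The real problem is in (\ref{eq3-2}). The ``main obstacle'' you identify rests on a reversed inequality. For $0\le\tau\le t$ and $0<\lambda<1$ you have $t-\tau\le t$, hence $\lambda^{t-\tau}\ge\lambda^t$ and therefore $1-\lambda^{t-\tau}\le 1-\lambda^t$. The inequality runs in your favor, so your own Fubini computation already gives exactly the stated term:
\[
\int_0^t\phi(s)\int_0^s\lambda^{s-\tau}\eta_\tau\,d\tau\,ds\le\frac{1}{\ln\lambda^{-1}}\int_0^t\phi(\tau)\eta_\tau\bigl(1-\lambda^{t-\tau}\bigr)d\tau\le\frac{1-\lambda^t}{\ln\lambda^{-1}}\int_0^t\phi(\tau)\eta_\tau\,d\tau .
\]
As written, however, you settle for the bound without the factor $(1-\lambda^t)$, which is strictly weaker than (\ref{eq3-2}). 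Your suggested repair by ``adjusting constants'' cannot work uniformly in $t\ge0$, because the ratio $1/(1-\lambda^t)$ between the two bounds blows up as $t\downarrow0$. So the proposal does not prove the lemma as stated, although the fix is the one-line observation above.

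With the direction corrected, your route is a valid and slightly more direct alternative to the paper's. The paper substitutes $\theta=s-\tau$, swaps the order of integration, and uses $\phi(s)\le\phi(s-\theta)$. It then enlarges $\int_0^{t-\theta}\phi\eta\le\int_0^t\phi\eta$, so the double integral factors as $\int_0^t\lambda^\theta d\theta\cdot\int_0^t\phi\eta$, and $(1-\lambda^t)$ appears directly. You instead use $\phi(s)\le\phi(\tau)$ and keep the exact inner integral, which even yields the sharper intermediate bound with $1-\lambda^{t-\tau}$.
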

\begin{proof}
See APPENDIX. \emph{B}.
\end{proof}

In what follows, we further establish the bounds on the expectations of $\|x_i(t)-\bar{x}(t)\|$ and $\int_0^t\phi(s)\|x_i(s)-\bar{x}(s)\|ds$.
\begin{lemma}\label{LE2}
Under Assumptions \ref{ASB1}, \ref{ASC1} and \ref{AS3-1}, if $\phi(t)>0$ is a non-increasing with respect to $t$, then for any $t\geq0$ and $i\in\mathcal{V}$,
\begin{equation}\label{eq3-h1}
\begin{split}
\mathbb{E}[\|x_i(t)-\bar{x}(t)\|]&\leq \theta_1\lambda^t+\theta_2\int_0^t\lambda^{t-s}\eta_s ds\\
&~~~+\theta_3\left(\int_0^t(\lambda^2)^{t-s}\eta_s^2ds\right)^{\frac{1}{2}}
\end{split}
\end{equation}
and
\begin{equation}\label{eq3-8}
\begin{split}
&\mathbb{E}\left[\int_0^t\phi(s)\|x_i(s)-\bar{x}(s)\|ds\right]\\
&\leq\frac{\theta_1\phi(0)}{\ln\lambda^{-1}}(1-\lambda^t)+\frac{\theta_2}{\ln\lambda^{-1}}(1-\lambda^t)\int_0^t\phi(s)\eta_sds\\
&~~~+\frac{\theta_3}{\sqrt{\ln\lambda^{-2}}}\sqrt{1-\lambda^{2t}}\left(\int_0^t\phi(s)^2ds\int_0^t\eta_s^2ds\right)^{\frac{1}{2}}.
\end{split}
\end{equation}
where $\theta_1$, $\theta_2$ and $\bar{x}$ are defined in Lemma \ref{LE1}, and $\theta_3=\sqrt{m}n^{\frac{3}{2}}\mathcal{C}K$.
\end{lemma}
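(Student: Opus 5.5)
We start from the pathwise bounds of Lemma~\ref{LE1} and take expectations term by term. The first two terms of (\ref{eq3-1}) and of (\ref{eq3-2}) are deterministic, since $\theta_1$, $\theta_2$, $\lambda$, $\eta_s$ and $\phi$ are all deterministic. They therefore carry over unchanged, and all the work is in bounding $\mathbb{E}[\|\Psi(t)\|]$ and $\mathbb{E}[\int_0^t\phi(s)\|\Psi(s)\|ds]$.

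For the pointwise bound (\ref{eq3-h1}), use Jensen's inequality:
\[
\mathbb{E}[\|\Psi(t)\|]\le(\mathbb{E}[\|\Psi(t)\|^2])^{1/2}.
\]
Then apply Lemma~\ref{LEE1} with $p=2$, where equality holds (the It\^{o} isometry). Two remarks are needed here.

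\begin{itemize}
\item The integrand $\tau\mapsto(([\Phi(t,\tau)]_{i\cdot}-\mathbf{1}_n^T/n)\otimes I_m)\eta_\tau g(\tau)$ depends on the fixed upper limit $t$. This is harmless: for fixed $t$ it is deterministic in $\tau$ apart from $g$, which is adapted and bounded by Assumption~\ref{AS3-1}.
\item $g(\tau)=\mathrm{col}(g_j(\tau))$ is driven by the independent Brownian motions $B_j$, so $dB$ is vector-valued. The isometry still applies componentwise, and the cross terms vanish by independence.
\end{itemize}

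This gives
\[
\mathbb{E}[\|\Psi(t)\|^2]=\mathbb{E}\int_0^t\left\|\left(\left([\Phi(t,\tau)]_{i\cdot}-\tfrac{\mathbf{1}_n^T}{n}\right)\otimes I_m\right)\eta_\tau g(\tau)\right\|^2d\tau .
\]
By Lemma~\ref{LEb1} each entry of $[\Phi(t,\tau)]_{i\cdot}-\mathbf{1}_n^T/n$ is at most $\mathcal{C}\lambda^{t-\tau}$ in absolute value. Combined with $\|g_j(\tau)\|\le K$, Cauchy--Schwarz over the $n$ blocks, and a crude Frobenius-norm factor to absorb the dimension constants, the integrand is bounded by $(\sqrt{m}n^{3/2}\mathcal{C}K)^2\lambda^{2(t-\tau)}\eta_\tau^2$. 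The constants are chosen generously to match $\theta_3$, mirroring how $\theta_2$ arose with $M$ in Lemma~\ref{LE1}. Taking square roots yields the third term of (\ref{eq3-h1}).

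For (\ref{eq3-8}), use Fubini to write
\[
\mathbb{E}\left[\int_0^t\phi(s)\|\Psi(s)\|ds\right]=\int_0^t\phi(s)\mathbb{E}\|\Psi(s)\|ds ,
\]
and insert the bound just obtained:
\[
\int_0^t\phi(s)\mathbb{E}\|\Psi(s)\|ds\le\theta_3\int_0^t\phi(s)\left(\int_0^s\lambda^{2(s-\tau)}\eta_\tau^2d\tau\right)^{1/2}ds .
\]
Apply Cauchy--Schwarz in $s$ to bound this by
\[
\theta_3\left(\int_0^t\phi(s)^2ds\right)^{1/2}\left(\int_0^t\int_0^s\lambda^{2(s-\tau)}\eta_\tau^2d\tau ds\right)^{1/2}.
\]
Exchanging the order of integration in the double integral gives
\[
\int_0^t\eta_\tau^2\int_\tau^t\lambda^{2(s-\tau)}ds\,d\tau=\int_0^t\eta_\tau^2\frac{1-\lambda^{2(t-\tau)}}{\ln\lambda^{-2}}d\tau .
\]
We need this to be at most $\frac{1-\lambda^{2t}}{\ln\lambda^{-2}}\int_0^t\eta_\tau^2d\tau$, but the inequality actually runs the other way, since $1-\lambda^{2(t-\tau)}\le 1$ while $1-\lambda^{2t}$ can be smaller.

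This prefactor is the main obstacle. The plan is to reproduce the device used for the deterministic terms in Lemma~\ref{LE1}, whatever it is. If it cannot be matched exactly, fall back to the bound with the factor $\sqrt{1-\lambda^{2t}}$ replaced by $1$. That is what the later application needs anyway, since only boundedness by $\frac{1}{\sqrt{\ln\lambda^{-2}}}(\int\phi^2\int\eta^2)^{1/2}$ is used in the rate computation.
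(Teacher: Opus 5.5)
\textbf{There is a genuine gap, but it comes from a wrong claim rather than a missing idea.} Up to the last step your argument is the paper's own proof: Jensen, the It\^{o} isometry from Lemma~\ref{LEE1} with $p=2$, the entrywise bound of Lemma~\ref{LEb1} together with Assumption~\ref{AS3-1}, Fubini, Cauchy--Schwarz in $s$, and an exchange of integration order. You then stop short of (\ref{eq3-8}) because you assert that the final inequality ``runs the other way.'' That assertion is false.

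For $0\le\tau\le t$ you have $0\le t-\tau\le t$. Since $0<\lambda<1$, this gives $\lambda^{2(t-\tau)}\ge\lambda^{2t}$, that is, $1-\lambda^{2(t-\tau)}\le 1-\lambda^{2t}$ for every $\tau\in[0,t]$. The maximum is attained at $\tau=0$. Comparing with $1$ was simply the wrong comparison. Hence
\[
\int_0^t\eta_\tau^2\,\frac{1-\lambda^{2(t-\tau)}}{\ln\lambda^{-2}}\,d\tau\le\frac{1-\lambda^{2t}}{\ln\lambda^{-2}}\int_0^t\eta_\tau^2\,d\tau ,
\]
which is exactly (\ref{eq3-10}). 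Taking square roots produces the factor $\sqrt{1-\lambda^{2t}}/\sqrt{\ln\lambda^{-2}}$ in (\ref{eq3-8}).

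The ``device'' you wanted to borrow from (\ref{eq3-7}) yields the same bound. Substitute $\theta=s-\tau$ and swap the order of integration:
\[
\int_0^t\!\!\int_0^s\lambda^{2(s-\tau)}\eta_\tau^2\,d\tau\,ds=\int_0^t\lambda^{2\theta}\int_0^{t-\theta}\eta_u^2\,du\,d\theta\le\int_0^t\lambda^{2\theta}\,d\theta\int_0^t\eta_u^2\,du .
\]

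So no fallback is needed. Your fallback, with $\sqrt{1-\lambda^{2t}}$ replaced by $1$, would prove a strictly weaker statement than the lemma, even though, as you note, it would still suffice for Theorem~\ref{TH1}. With this one-line correction your proof is complete and coincides with the paper's.

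A minor notational point: in the paper $g(\tau)=\mathrm{diag}(g_j(\tau))_{j\in\mathcal{V}}$, not $\mathrm{col}$. Write $r=[\Phi(t,\tau)]_{i\cdot}-\mathbf{1}_n^T/n$. Then $(r\otimes I_m)g(\tau)=[r_1g_1(\tau),\dots,r_ng_n(\tau)]$, whose squared Frobenius norm is
\[
\sum_j r_j^2\|g_j(\tau)\|^2\le n\mathcal{C}^2K^2\lambda^{2(t-\tau)}\le\theta_3^2\lambda^{2(t-\tau)} .
\]
This confirms that your hand-waved constant $\theta_3$ is safe.
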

\begin{proof}
See APPENDIX. \emph{C}.
\end{proof}

Using Lemma \ref{LE3} to (\ref{eq3-h1}) in Lemma \ref{LE2}, it is not difficult to verify that $\mathrm{lim}_{t\rightarrow \infty}\mathbb{E}[\|x_i(t)-\bar{x}(t)\|]=0$, i.e., the states of the agents asymptotically reach a common point in expectation. Based on the achieved lemmas, we are now ready to prove Theorem \ref{TH1}.
\begin{proof}[\textbf{Proof of Theorem~1}]
Consider a Lyapunov function as follows:
\begin{equation*}
V(\bar{x}(t),t)=\frac{1}{n}\varphi(t)(f(\bar{x}(t))-f(x^*))+\frac{1}{2}\|\bar{x}(t)-x^*\|^2,
\end{equation*}
where $\varphi(t)=\int_0^t\eta_sds$.
From algorithm (\ref{eq3_1}), it follows that
\begin{equation}\label{eq111}
d\bar{x}(t)=-\frac{1}{n}\eta_t\left(\sum_{i=1}^n\nabla f_i(x_i(t))dt+\sum_{i=1}^ng_i(t)dB_i(t)\right).
\end{equation}
By defining $\tilde{g}(t)=\left(g_1(t), g_2(t), \ldots, g_n(t)\right)$ and $B(t)=\mathrm{col}(B_i(t))_{i\in \mathcal{V}}$, (\ref{eq111}) is equivalent to
\begin{equation*}
d\bar{x}(t)=-\frac{1}{n}\eta_t\left(\sum_{i=1}^n\nabla f_i(x_i(t))dt+\tilde{g}(t)dB(t)\right).
\end{equation*}
Using Lemma \ref{LEIT1}, we have
\begin{equation}\begin{split}\label{eq3-11}
&dV(\bar{x}(t),t)\\
&=\frac{\partial V(\bar{x}(t),t)}{\partial t}dt+\left\langle\nabla_1 V(\bar{x}(t),t),-\frac{1}{n}\eta_t\sum_{i=1}^n\nabla f_i(x_i(t))\right\rangle dt\\
&~~~+\frac{1}{2}\mathrm{Tr}\left(\frac{1}{n^2}\eta_t^2\tilde{g}(t)\tilde{g}(t)^T\nabla_{11}V(\bar{x}(t),t)\right)dt\\
&~~~+\left\langle\nabla_1V(\bar{x}(t),t),-\frac{1}{n}\eta_t\tilde{g}(t)\right\rangle dB(t)\\
&=\frac{1}{2n^2}\eta_t^2\mathrm{Tr}\left(\tilde{g}(t)\tilde{g}(t)^T\left(\frac{1}{n}\varphi(t)\nabla_2f(\bar{x}(t))+I_m\right)\right)dt\\
&~~~+\left\langle\frac{1}{n}\varphi(t)\nabla f(\bar{x}(t))+(\bar{x}(t)-x^*),-\frac{1}{n}\eta_t\tilde{g}(t)\right\rangle dB(t)\\
&~~~+\frac{1}{n^2}\varphi(t)\eta_t\left\langle\nabla f(\bar{x}(t)),-\sum_{i=1}^n\nabla f_i(x_i(t))\right\rangle dt\\
&~~~+\frac{1}{n}\eta_t\left\langle \bar{x}(t)-x^*,-\sum_{i=1}^n\nabla f_i(x_i(t))\right\rangle dt\\
&~~~+\frac{1}{n}\eta_t(f(\bar{x}(t))-f(x^*))dt.
\end{split}\end{equation}
Note that
\begin{equation}\begin{split}\label{eq3-12}
&\left\langle\nabla f(\bar{x}(t)),-\sum_{i=1}^n\nabla f_i(x_i(t))\right\rangle\\
&=-\left\langle\nabla f(\bar{x}(t)),\sum_{i=1}^n(\nabla f_i(x_i(t))-\nabla f_i(\bar{x}(t)))\right\rangle\\
&~~~-\left\langle\nabla f(\bar{x}(t)),\sum_{i=1}^n\nabla f_i(\bar{x}(t))\right\rangle\\
&\leq\sum_{i=1}^n\|\nabla f(\bar{x}(t))\|\|\nabla f_i(x_i(t))-\nabla f_i(\bar{x}(t))\|\\
&~~~-\|\nabla f(\bar{x}(t))\|^2\\
&\leq nML\sum_{i=1}^n\|x_i(t)-\bar{x}(t)\|,
\end{split}\end{equation}
where the first inequality is due to the Cauchy-Schwarz inequality, and the last inequality is obtained by discarding a negative term and using Assumptions \ref{ASC1} and \ref{ASC2}. Furthermore,
\begin{equation}\begin{split}\label{eq3-13}
&\left\langle\bar{x}(t)-x^*,-\sum_{i=1}^n\nabla f_i(x_i(t))\right\rangle\\
&=\sum_{i=1}^n\langle x^*-x_i(t),\nabla f_i(x_i(t))\rangle\\
&~~~+\sum_{i=1}^n\langle x_i(t)-\bar{x}(t),\nabla f_i(x_i(t))\rangle\\
&\leq\sum_{i=1}^n(f_i(x^*)-f_i(x_i(t)))+M\sum_{i=1}^n\|x_i(t)-\bar{x}(t)\|\\
&=f(x^*)-\sum_{i=1}^nf_i(x_i(t))+M\sum_{i=1}^n\|x_i(t)-\bar{x}(t)\|,
\end{split}\end{equation}
where the inequality follows from the convexity of the function and the Cauchy-Schwarz inequality. Note that
\begin{equation}\begin{split}\label{eq3-14}
&\mathrm{Tr}\left(\tilde{g}(t)\tilde{g}(t)^T\left(\frac{1}{n}\varphi(t)\nabla_2 f(\bar{x}(t))+I_m\right)\right)\\
&=\frac{1}{n}\varphi(t)\mathrm{Tr}\left(\tilde{g}(t)\tilde{g}(t)^T\nabla_2 f(\bar{x}(t))\right)+\mathrm{Tr}\left(\tilde{g}(t)\tilde{g}(t)^T\right)\\
&\leq(\varphi(t)L+1)mnK^2,
\end{split}\end{equation}
where the last inequality follows from the fact that $\mathrm{Tr}(PQ)\leq m\|P\|_S\|Q\|_S$, for any $m\times m$ matrices $P$ and $Q$, and $\|\cdot\|_S$ denotes the spectral norm. Substituting (\ref{eq3-12})-(\ref{eq3-14}) into (\ref{eq3-11}) yields
\begin{equation*}\begin{split}
&dV(\bar{x}(t),t)\\
&\leq\frac{\eta_t}{n}(f(\bar{x}(t))-f(x^*))dt+\frac{ML\varphi(t)\eta_t}{n}\sum_{i=1}^n\|x_i(t)-\bar{x}(t)\|dt\\
&~~~+\frac{\eta_t}{n}\left(f(x^*)-\sum_{i=1}^nf_i(x_i(t))+M\sum_{i=1}^n\|x_i(t)-\bar{x}(t)\|\right)dt\\
&~~~+\left\langle\frac{1}{n}\varphi(t)\nabla f(\bar{x}(t))+(\bar{x}(t)-x^*),-\frac{1}{n}\eta_t\tilde{g}(t)\right\rangle dB(t)\\
&~~~+\frac{\eta_t^2mK^2}{2n}(\varphi(t)L+1)dt\\
&\leq\frac{\eta_t}{n}\sum_{i=1}^n\left|f_i(\bar{x}(t))-f_i(x_i(t))\right|dt+\frac{\eta_t^2mK^2}{2n}(\varphi(t)L+1)dt\\
&~~~+\left\langle\frac{1}{n}\varphi(t)\nabla f(\bar{x}(t))+(\bar{x}(t)-x^*),-\frac{1}{n}\eta_t\tilde{g}(t)\right\rangle dB(t)\\
&~~~+\left(\frac{ML\varphi(t)}{n}+\frac{M}{n}\right)\sum_{i=1}^n\eta_t\|x_i(t)-\bar{x}(t)\|dt\\
\end{split}\end{equation*}
\begin{equation*}\begin{split}
&\leq\left\langle\frac{1}{n}\varphi(t)\nabla f(\bar{x}(t))+(\bar{x}(t)-x^*),-\frac{1}{n}\eta_t\tilde{g}(t)\right\rangle dB(t)\\
&~~~+\left(\frac{ML\varphi(t)}{n}+\frac{2M}{n}\right)\sum_{i=1}^n\eta_t\|x_i(t)-\bar{x}(t)\|dt\\
&~~~+\frac{\eta_t^2mK^2}{2n}(\varphi(t)L+1)dt.
\end{split}\end{equation*}
Moreover, by the definition of the It\^{o} process, we know that for any $t>0$,
\begin{equation}\begin{split}\label{eq1112}
&V(\bar{x}(t),t)-V(\bar{x}(0),0)\\
&\leq\int_0^t\left\langle\frac{1}{n}\varphi(s)\nabla f(\bar{x}(s))+(\bar{x}(s)-x^*),-\frac{1}{n}\eta_s\tilde{g}(s)\right\rangle dB(s)\\
&~~~+\sum_{i=1}^n\frac{ML}{n}\int_0^t\varphi(s)\eta_s\|x_i(s)-\bar{x}(s)\|ds\\
&~~~+\sum_{i=1}^n\frac{2M}{n}\int_0^t\eta_s\|x_i(s)-\bar{x}(s)\|ds\\
&~~~+\frac{mK^2}{2n}\int_0^t\eta_s^2(\varphi(s)L+1)ds.
\end{split}\end{equation}
Taking expectations of both sides of (\ref{eq1112}) and using the martingale property of It\^{o} integral, there holds
\begin{equation*}\begin{split}
&\mathbb{E}[V(\bar{x}(t),t)]-V(\bar{x}(0),0)\\
&\leq\sum_{i=1}^n\frac{ML}{n}\mathbb{E}\left[\int_0^t\varphi(s)\eta_s\|x_i(s)-\bar{x}(s)\|ds\right]\\
&~~~+\sum_{i=1}^n\frac{2M}{n}\mathbb{E}\left[\int_0^t\eta_s\|x_i(s)-\bar{x}(s)\|ds\right]\\
&~~~+\frac{mK^2}{2n}\int_0^t\eta_s^2(\varphi(s)L+1)ds.
\end{split}\end{equation*}
Based on the definition of $V(\bar{x}(t),t)$, we have
\begin{equation}\begin{split}\label{eq3-15}
&\frac{1}{n}\varphi(t)\mathbb{E}[f(\bar{x}(t))-f(x^*)]+\frac{1}{2}\mathbb{E}[\|\bar{x}(t)-x^*\|^2]\\
&\leq\frac{1}{2}\|\bar{x}(0)-x^*\|^2+\frac{mK^2}{2n}\int_0^t\eta_s^2(\varphi(s)L+1)ds\\
&~~~+\sum_{i=1}^n\frac{ML}{n}\mathbb{E}\left[\int_0^t\varphi(s)\eta_s\|x_i(s)-\bar{x}(s)\|ds\right]\\
&~~~+\sum_{i=1}^n\frac{2M}{n}\mathbb{E}\left[\int_0^t\eta_s\|x_i(s)-\bar{x}(s)\|ds\right].
\end{split}\end{equation}
By (\ref{eq3-15}), we have
\begin{equation}\begin{split}\label{eq3-16}
&\mathbb{E}[f(\bar{x}(t))-f(x^*)]\\
&\leq\frac{n}{2\varphi(t)}\|\bar{x}(0)-x^*\|^2+\frac{mK^2}{2\varphi(t)}\int_0^t\eta_s^2(\varphi(s)L+1)ds\\
&~~~+\frac{ML}{\varphi(t)}\sum_{i=1}^n\mathbb{E}\left[\int_0^t\varphi(s)\eta_s\|x_i(s)-\bar{x}(s)\|ds\right]\\
&~~~+\frac{2M}{\varphi(t)}\sum_{i=1}^n\mathbb{E}\left[\int_0^t\eta_s\|x_i(s)-\bar{x}(s)\|ds\right].
\end{split}\end{equation}
Note that $\varphi(s)\eta_s=\eta_s\int_0^s\eta_\tau d\tau=\frac{\beta^2}{1-a}\left((s+1)^{1-2a}-(s+1)^{-a}\right)$, so $\varphi(s)\eta_s$ is non-increasing if $a>\frac{1}{2}$. Based on Assumption \ref{ASC1}, we know that for any $t>0$,
\begin{equation}\begin{split}\label{eq3-17}
&f(x_i(t))-f(x^*)\\
&=(f(x_i(t))-f(\bar{x}(t)))+(f(\bar{x}(t))-f(x^*))\\
&\leq\sum_{j=1}^n|f_j(x_i(t))-f_j(\bar{x}(t))|+(f(\bar{x}(t))-f(x^*))\\
&\leq\sum_{j=1}^nM\|x_i(t)-\bar{x}(t)\|+(f(\bar{x}(t))-f(x^*))\\
&=nM\|x_i(t)-\bar{x}(t)\|+(f(\bar{x}(t))-f(x^*)).
\end{split}\end{equation}
Taking expectations on both sides of (\ref{eq3-17}) yields
\begin{equation}\begin{split}\label{eq3-18}
\mathbb{E}[f(x_i(t))-f(x^*)]&\leq nM\mathbb{E}[\|x_i(t)-\bar{x}(t)\|]\\
&~~~+\mathbb{E}[f(\bar{x}(t))-f(x^*)].
\end{split}\end{equation}
Substituting (\ref{eq3-16}) into (\ref{eq3-18}), and using Lemma \ref{LE2}, there holds
\begin{equation}\label{eqth1}
\begin{split}
&\mathbb{E}[f(x_i(t))-f(x^*)]\\
&\leq nM\left(\theta_1\lambda^t+\theta_2\int_0^t\lambda^{t-\tau}\eta_\tau d\tau+\theta_3\left(\int_0^t(\lambda^2)^{t-\tau}\eta_\tau^2d\tau\right)^{\frac{1}{2}}\right)\\
&~~~+\left(\frac{nML\theta_2(1-\lambda^t)}{\ln\lambda^{-1}}+\frac{mLK^2}{2}\right)\frac{1}{\varphi(t)}\int_0^t\varphi(s)\eta_s^2ds\\
&~~~+\frac{nML\theta_3\sqrt{1-\lambda^{2t}}}{\sqrt{\ln\lambda^{-2}}}\frac{1}{\varphi(t)}\left(\int_0^t\varphi(s)^2\eta_s^2ds\int_0^t\eta_s^2ds\right)^{\frac{1}{2}}\\
&~~~+\frac{n}{2\varphi(t)}\|\bar{x}(0)-x^*\|^2+\frac{nM\theta_1\eta_0(L\varphi(0)+2)}{\ln \lambda^{-1}}\frac{1-\lambda^t}{\varphi(t)}\\
&~~~+\left(2nM\left(\frac{\theta_2(1-\lambda^t)}{\ln\lambda^{-1}}+\frac{\theta_3\sqrt{1-\lambda^{2t}}}{\sqrt{\ln\lambda^{-2}}}\right)\right.\\
&~~~~~~~~~\left.\frac{mK^2}{2}\right)\frac{1}{\varphi(t)}\int_0^t\eta_s^2ds.
\end{split}
\end{equation}
Let $\eta_t=\frac{\beta}{(t+1)^a}$ with $\frac{1}{2}<a\leq1$. It is straightforward to verify that
$\varphi(t)=\mathcal{O}(t^{1-a})$ if $0<a<1$ and $\varphi(t)=\mathcal{O}(\ln t)$ if $a=1$, which further implies that $1/\varphi(t)=\mathcal{O}(t^{-(1-a)})$ if $0<a<1$ and $1/\varphi(t)=\mathcal{O}((\ln t)^{-1})$ if $a=1$. From (\ref{eqth1}), the upper bound of $\mathbb{E}[f(x_i(t))-f(x^*)]$ consists of several terms with different decay behaviors. Specifically, by Lemma \ref{LE3}, the terms in the first line of (\ref{eqth1}), including $\lambda^t$, $\int_0^t \lambda^{t-\tau}\eta_\tau d\tau$, and $\left(\int_0^t(\lambda^2)^{t-\tau}\eta_\tau^2d\tau\right)^{\frac{1}{2}}$, all converge at the rate $\mathcal{O}(t^{-a})$ if any $a>0$.
Moreover, (\ref{eqth1}) also contains integral terms involving $\varphi(s)\eta_s^2$ and $\varphi(s)^2\eta_s^2$. A direct calculation shows that $\int_0^t \varphi(s)\eta_s^2ds=\mathcal{O}(t^{2-3a})$ if $\frac{1}{2}<a<\frac{2}{3}$, $\mathcal{O}(\ln t)$ if $a=\frac{2}{3}$, and $\mathcal{O}(1)$ if $\frac{2}{3}<a\leq1$, while $\int_0^t \varphi(s)^2\eta_s^2ds=\mathcal{O}(t^{3-4a})$ if $\frac{1}{2}<a<\frac{3}{4}$, $\mathcal{O}(\ln t)$ if $a=\frac{3}{4}$, and $\mathcal{O}(1)$ if $\frac{3}{4}<a\leq1$. Finally, the deterministic terms scaled by $1/\varphi(t)$ in (\ref{eqth1}) converge at the rate $\mathcal{O}(t^{-(1-a)})$ if $0<a<1$ and $\mathcal{O}((\ln t)^{-1})$ if $a=1$. Submitting the bounds to (\ref{eqth1}) immediately implies (\ref{eqco1}). This completes the proof.
\end{proof}

\section{Simulation examples}
In this section, we present a simulation example to illustrate the validity of our result. Consider an MAS consisting of six agents, indexed by the set $\mathcal{V}=\{1,\cdots, 6\}$. Each agent communicates with its neighbors via a time-varying directed graph that periodically switches among four subgraphs, as shown in Fig. \ref{fig1}. The weight of each edge is assumed to be 1. The switching order of the four subgraphs follows $(a)\rightarrow(b)\rightarrow(c)\rightarrow(d)\rightarrow(a)\rightarrow\cdots$, and the holding time of each subgraph is set to $0.01\,\text{s}$. Note that the union of all possible graphs is strongly connected. Then, the connectivity of the graph in Fig. \ref{fig1} satisfies the conditions of Assumption \ref{ASB1} with $T_c=0.04$.

We define the state of agent $i$ as $x_i=[x_{i1},x_{i2}]^T$. For any $i\in \mathcal{V}$, the local objective function is given by
\[f_i(x)=a_ix_1^2-b_ix_1+\frac{i}{6}x_2^2-\frac{i+1}{2}x_2,\]
where $x=[x_1,x_2]^T$ is the decision variable. Suppose that $a_1=0.3$, $a_2=a_3=0.15$, $a_4=a_6=0.1$, $a_5=0.2$, and $b_1=b_3=0.5$, $b_2=b_4=0.8$, $b_5=b_6=0.2$. The sum of the local objective functions is computed as $f(x)=x_1^2-3x_1+\frac{21}{6}x_2^2-\frac{27}{2}x_2$. It is obvious that the global objective function $f(x)$ is convex. By simple computation, we have $x^*=[1.5, 1.93]^T$. In the simulation, we select $\eta_t=\frac{2}{t+1}$ and $g_i(t)=[\sin(t), \cos(t)]^T$. The initial states are set as $x_1(0)=[0.3, 2]^T$, $x_2(0)=[0.5, 1.3]^T$, $x_3(0)=[0.7, 2.7]^T$, $x_4(0)=[0.9, 1]^T$, $x_5(0)=[1.1, 3]^T$ and $x_6(0)=[1.3, 1.6]^T$. By independently running algorithm (\ref{eq3_1}) in 3000 rounds, we plot the averages of the agents' states and the averages of $f(x_i(t))-f(x^*)$ in Fig. \ref{fig2} and Fig. \ref{fig3}, respectively. From Fig. \ref{fig2},  we see that the states of all agents converge asymptotically to the common minimizer $x^*=[1.5, 1.93]^T$ in expectation. While from Fig. \ref{fig3}, we know that after a period of evolution time, the expectation of $f(x_i(t))$ approaches to the optimal value $f(x^*)$. These observations are consistent with the theoretical results established in Theorem \ref{TH1}. Thus, the effectiveness of our method is verified.

\begin{figure}
\centering
\includegraphics[width=0.4\textwidth]{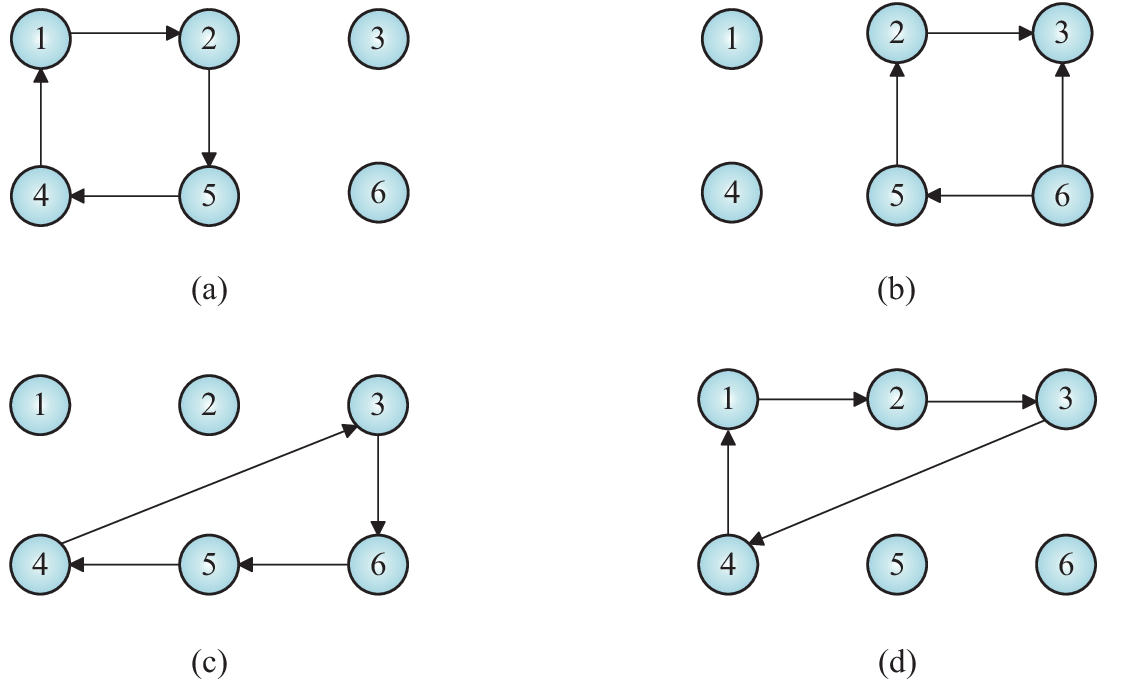}
\caption{The time-varying graph.} \label{fig1}
\end{figure}

\begin{figure}
\centering
\includegraphics[width=0.35\textwidth]{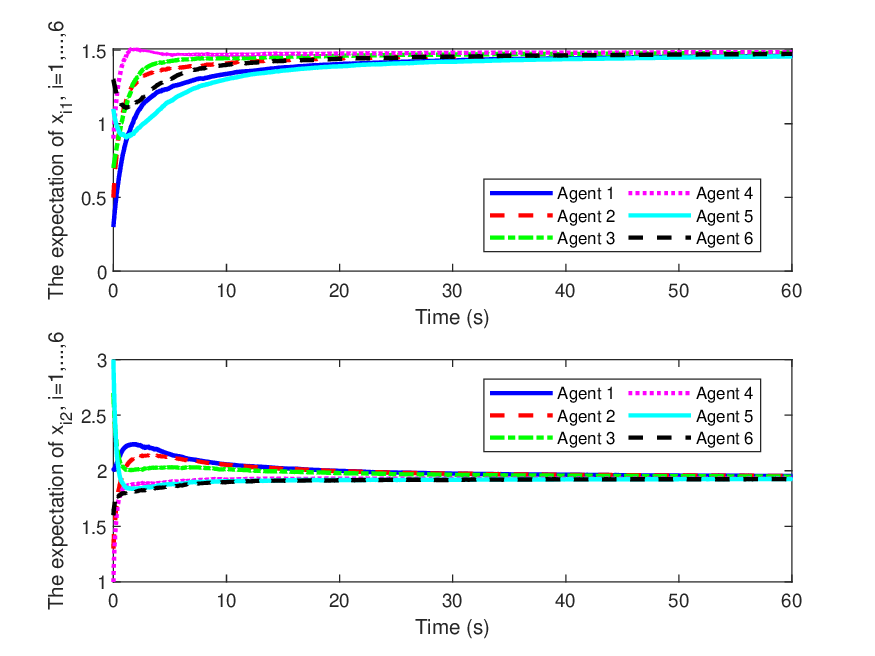}
\caption{The trajectories of $\mathbb{E}[x_i(t)], i=1,\cdots, 6$ under algorithm (\ref{eq3_1}).} \label{fig2}
\end{figure}

\begin{figure}
\centering
\includegraphics[width=0.35\textwidth]{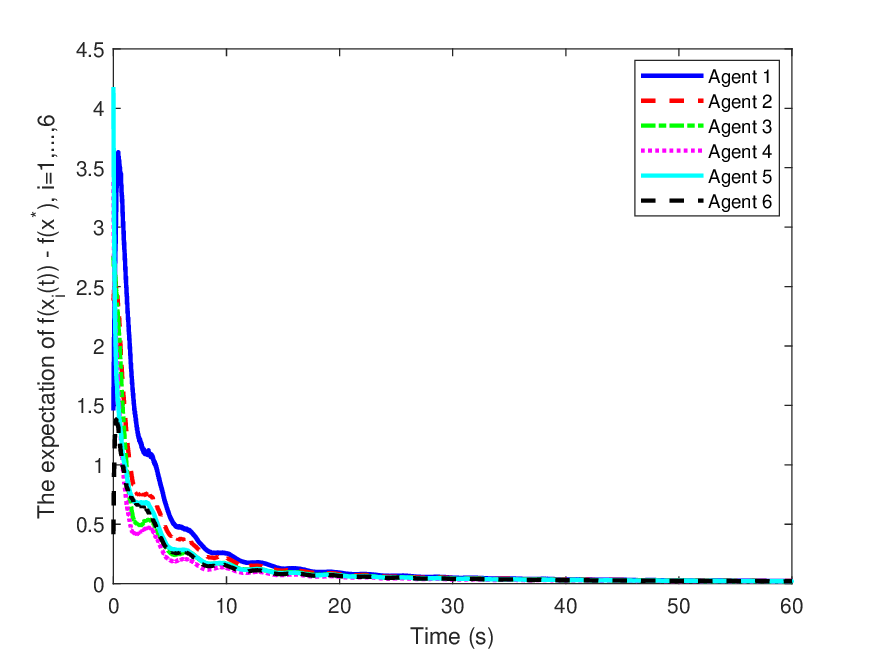}
\caption{The trajectories of $\mathbb{E}[f(x_i(t))-f(x^*)], i=1,...,6$ under algorithm (\ref{eq3_1}).} \label{fig3}
\end{figure}

\section{Conclusions}
In this paper, we have presented a continuous-time distributed stochastic gradient algorithm to address the distributed convex optimization problem. By implementing the algorithm, each agent makes decisions only using the stochastic gradient influenced by the Brownian motion, as well as the local state information. The convergence is analyzed under a time-varying directed communication graph. The results show that if the digraph is $(\delta,T_c)$-strongly connected and balanced, then the proposed algorithm converges in expectation, and the corresponding convergence rate can be as fast as $\mathcal{O}\left(t^{-1/4}\sqrt{\ln t}\right)$. A simulation example has been provided to demonstrate the effectiveness of our theoretical results.
Our future work will focus on some other interesting topics, such as the cases with non-convex objective functions, constraints, and convergence in high probability, which will bring new challenges in studying continuous-time distributed stochastic optimization.

\section{Appendix}
\subsection{Proof of Lemma \ref{LE3}}
For $t\geq t_0$, one has
\begin{equation}\begin{split}\label{eq1271}
&\int_{0}^{t}\frac{\lambda^{-s}}{(s+1)^a}ds\\
&=\frac{\lambda^{-s}}{\ln\lambda^{-1}(s+1)^a} \bigg|_{0}^{t}+\frac{a}{\ln \lambda^{-1}} \int_{0}^{t}\frac{\lambda^{-s}}{(s+1)^{a+1}}ds\\
&\leq\frac{\lambda^{-t}}{\ln \lambda^{-1}(t+1)^a}+ \frac{a}{\ln\lambda^{-1}} \int_{0}^{t_0} \frac{\lambda^{-s}}{(s+1)^{a+1}}ds\\
&~~~+\frac{a}{\ln\lambda^{-1}}\int_{t_0}^{t} \frac{\lambda^{-s}}{(s+1)^{a+1}}ds\\
&\leq\frac{\lambda^{-t}}{\ln \lambda^{-1}(t+1)^a}+ \frac{a \lambda^{-t_0} t_0}{\ln \lambda^{-1}}\\
&~~~+\frac{a}{\ln\lambda^{-1}(t_0+1)}\int_{0}^{t} \frac{\lambda^{-s}}{(s+1)^a}ds.
\end{split}\end{equation}
Due to the fact that $\frac{a}{\ln \lambda^{-1}(t_0+1)}<1$, inequality (\ref{eq1271}) implies that for any $t\geq t_0$,
\begin{equation}\begin{split}\label{eq1272}
\int_{0}^{t}\frac{\lambda^{-s}}{(s+1)^a}ds&\leq \frac{t_0+1}{\ln\lambda^{-1}(t_0+1)-a}\frac{\lambda^{-t}}{(t+1)^a}\\
&+\frac{a\lambda^{-t_0}t_0(t_0+1)}{\ln \lambda^{-1}(t_0+1)-a}.
\end{split}\end{equation}
Moreover, for $t< t_0$, we have $\int_{0}^{t} \frac{\lambda^{-s}}{(s+1)^a}ds\leq \int_{0}^{t_0} \frac{\lambda^{-s}}{(s+1)^a}ds\leq \lambda^{-t_0} t_0$. Together with (\ref{eq1272}), it immediately leads to the validity of the result.

\subsection{Proof of Lemma \ref{LE1}}

Denote $x(t)=\mathrm{col}(x_i(t))_{i\in \mathcal{V}}$, $F(x(t))=\mathrm{col}(\nabla f_i(x_i(t)))_{i\in \mathcal{V}}$, $B(t)=\mathrm{col}(B_i(t))_{i\in \mathcal{V}}$ and $g(t)=\mathrm{diag}(g_i(t))_{i\in \mathcal{V}}$, by algorithm (\ref{eq3_1}), we have
\begin{equation*}\label{eqso1}
\begin{split}
dx(t)=(-(L(t)\otimes I_m)x(t)-\eta_tF(x(t)))dt-\eta_tg(t)dB(t).
\end{split}
\end{equation*}
Based on the result on linear equations in the narrow sense \cite{C16}, we have
\begin{equation}\begin{split}\label{eq3-3}
x(t)=&(\Phi(t,0)\otimes I_m)x(0)-\int_0^t(\Phi(t,s)\otimes I_m)\eta_s F(x(s))ds\\
&-\int_0^t(\Phi(t,s)\otimes I_m)\eta_s g(s)dB(s),
\end{split}\end{equation}
for any $t\geq 0$, where $\Phi(\cdot,\cdot)$ is defined in (\ref{eqb2}). Note that $\Phi(t,s)$ is doubly stochastic for any $t,s\geq 0$, one has
\begin{equation}\begin{split}\label{eq3-4}
\bar{x}(t)=&\left(\frac{\mathbf{1}_n^T}{n}\otimes I_m\right)x(0)-\int_0^t\left(\frac{\mathbf{1}_n^T}{n}\otimes I_m\right)\eta_s F(x(s))ds\\
&-\int_0^t\left(\frac{\mathbf{1}_n^T}{n}\otimes I_m\right)\eta_s g(s)dB(s).
\end{split}\end{equation}
By combining (\ref{eq3-3}) and (\ref{eq3-4}), it follows that
\begin{equation*}\begin{split}
&x_i(t)-\bar{x}(t)\\
&=-\int_0^t\left(\left([\Phi(t,s)]_{i\cdot}-\frac{\mathbf{1}_n^T}{n}\right)\otimes I_m\right)\eta_s F(x(s))ds\\
&~~~-\int_0^t\left(\left([\Phi(t,s)]_{i\cdot}-\frac{\mathbf{1}_n^T}{n}\right)\otimes I_m\right)\eta_s g(s)dB(s)\\
&~~~+\left(\left([\Phi(t,0)]_{i\cdot}-\frac{\mathbf{1}_n^T}{n}\right)\otimes I_m\right)x(0).
\end{split}\end{equation*}
It implies that
\begin{equation}\begin{split}\label{eq3-5}
&\|x_i(t)-\bar{x}(t)\|\\
&\leq\int_0^t\left\|\left(\left([\Phi(t,s)]_{i\cdot}-\frac{\mathbf{1}_n^T}{n}\right)\otimes I_m\right)\eta_s F(x(s))\right\|ds\\
&~~~+\left\|\int_0^t\left(\left([\Phi(t,s)]_{i\cdot}-\frac{\mathbf{1}_n^T}{n}\right)\otimes I_m\right)\eta_s g(s)dB(s)\right\|\\
&~~~+\left\|\left(\left([\Phi(t,0)]_{i\cdot}-\frac{\mathbf{1}_n^T}{n}\right)\otimes I_m\right)x(0)\right\|\\
&\leq\left\|\int_0^t\left(\left([\Phi(t,s)]_{i\cdot}-\frac{\mathbf{1}_n^T}{n}\right)\otimes I_m\right)\eta_s g(s)dB(s)\right\|\\
&~~~+\sqrt{nm}\mathcal{C}\|x(0)\|\lambda^t+\sqrt{m}n^{\frac{3}{2}}\mathcal{C}M\int_0^t\lambda^{t-s}\eta_s ds\\
&=\theta_1\lambda^{t}+\theta_2\int_0^t\lambda^{t-s}\eta_s ds+\|\Psi(t)\|,
\end{split}\end{equation}
where the second inequality results from Lemma \ref{LEb1} and the fact that $\|col(w_i)_{i=1}^n\|\leq\sum_{i=1}^n\|w_i\|$. Based on (\ref{eq3-5}), we have
\begin{equation}\begin{split}\label{eq3-6}
&\int_0^t\phi(s)\|x_i(s)-\bar{x}(s)\|ds\\
&\leq\theta_1\int_0^t\phi(s)\lambda^sds+\theta_2\int_0^t\phi(s)\int_0^s\lambda^{s-\tau}\eta_\tau d\tau ds\\
&~~~+\int_0^t\phi(s)\|\Psi(s)\|ds.
\end{split}\end{equation}
Because $\phi(t)>0$ is non-increasing, it holds that $\int_0^t\phi(s)\lambda^sds\leq\phi(0)\int_0^t\lambda^sds=\frac{\phi(0)}{ln\lambda^{-1}}(1-\lambda^t)$. Furthermore, note that
\begin{equation}\begin{split}\label{eq3-7}
\int_0^t\phi(s)\int_0^s\lambda^{s-\tau}\eta_\tau d\tau ds&=\int_0^t\int_0^s\phi(s)\lambda^{\theta}\eta_{s-\theta} d\theta ds\\
&=\int_0^t\lambda^{\theta}\int_\theta^t\phi(s)\eta_{s-\theta}dsd\theta\\
&\leq\int_0^t\lambda^{\theta}\int_\theta^t\phi(s-\theta)\eta_{s-\theta}dsd\theta\\
&\leq\int_0^t\lambda^{\theta}d\theta\int_0^t\phi(s)\eta_{s}ds\\
&=\frac{1}{\ln\lambda^{-1}}(1-\lambda^t)\int_0^t\phi(s)\eta_sds,
\end{split}\end{equation}
where the first equality holds by the change of variables $s-\tau=\theta$, the second equality is obtained by changing the order of integration, and the first inequality follows from the fact that $\phi(s)$ is non-increasing. Then, combining the inequality $\int_0^t\phi(s)\lambda^sds\leq\frac{\phi(0)}{\ln\lambda^{-1}}(1-\lambda^t)$ and substituting inequalities (\ref{eq3-7}) into (\ref{eq3-6}) immediately imply (\ref{eq3-2}) in Lemma \ref{LE1}.
\subsection{Proof of Lemma \ref{LE2}}

First, it is straightforward to establish the following result using Lemma \ref{LEb1} and Assumption \ref{AS3-1},
\begin{equation*}
\begin{split}
&\left\|\left(\left([\Phi(t,s)]_{i\cdot}-\frac{\mathbf{1}_n^T}{n}\right)\otimes I_m\right)\eta_s g(s)\right\|^2\\
&\leq(\sqrt{nm}\mathcal{C}\lambda^{t-s}\eta_s nK)^2,
\end{split}
\end{equation*}
where $\mathcal{C}$, $\lambda$ are defined in Lemma \ref{LEb1}, and $g(s)$ are defined in the proof of Lemma \ref{LE1}, then
\begin{equation}\label{eq3-h3}
\begin{split}
&\mathbb{E}[\|\Psi(t)\|]\\
&=\mathbb{E}\left[\left\|\int_0^t\left(\left([\Phi(t,s)]_{i\cdot}-\frac{\mathbf{1}_n^T}{n}\right)\otimes I_m\right)\eta_s g(s)dB(s)\right\|\right]\\
&\leq \left(\mathbb{E}\left[\left\|\int_0^t\left(\left([\Phi(t,s)]_{i\cdot}-\frac{\mathbf{1}_n^T}{n}\right)\otimes I_m\right)\eta_s g(s)dB(s)\right\|^2\right]\right)^{\frac{1}{2}}\\
&=\left(\mathbb{E}\left[\int_0^t\left\|\left(\left([\Phi(t,s)]_{i\cdot}-\frac{\mathbf{1}_n^T}{n}\right)\otimes I_m\right)\eta_s g(s)\right\|^2ds\right]\right)^{\frac{1}{2}}\\
&\leq\left(\mathbb{E}\left[\int_0^tn^3m\mathcal{C}^2K^2(\lambda^2)^{t-s}\eta_s^2ds\right]\right)^{\frac{1}{2}}\\
&=\theta_3\left(\int_0^t(\lambda^2)^{t-s}\eta_s^2ds\right)^{\frac{1}{2}},
\end{split}
\end{equation}
where the first inequality is obtained by applying H\"{o}lder's inequality and the second equality follows from Lemma \ref{LEE1}.
Using (\ref{eq3-1}) in Lemma \ref{LE1} and (\ref{eq3-h3}), we have
\begin{equation*}\begin{split}\label{eq3-h2}
&\mathbb{E}[\|x_i(t)-\bar{x}(t)\|]\\
&\leq \theta_1\lambda^t+\theta_2\int_0^t\lambda^{t-s}\eta_s ds+\mathbb{E}[\|\Psi(t)\|]\\
&\leq \theta_1\lambda^t+\theta_2\int_0^t\lambda^{t-s}\eta_s ds+\theta_3\left(\int_0^t(\lambda^2)^{t-s}\eta_s^2ds\right)^{\frac{1}{2}}.
\end{split}\end{equation*}
It follows from (\ref{eq3-2}) in Lemma \ref{LE1} that
\begin{equation}\begin{split}\label{eq3-9}
&\mathbb{E}\left[\int_0^t\phi(s)\|x_i(s)-\bar{x}(s)\|ds\right]\\
&\leq \frac{\theta_1\phi(0)}{\ln\lambda^{-1}}(1-\lambda^t)+\frac{\theta_2}{\ln\lambda^{-1}}(1-\lambda^t)\int_0^t\phi(s)\eta_sds\\
&~~~+\mathbb{E}\left[\int_0^t\phi(s)\|\Psi(s)\|ds\right]\\
&\leq \frac{\theta_1\phi(0)}{\ln\lambda^{-1}}(1-\lambda^t)+\frac{\theta_2}{\ln\lambda^{-1}}(1-\lambda^t)\int_0^t\phi(s)\eta_sds\\
&~~~+\theta_3\int_0^t\phi(s)\left(\int_0^s(\lambda^2)^{s-\tau}\eta_\tau^2d\tau\right)^{\frac{1}{2}}ds\\
&\leq \frac{\theta_1\phi(0)}{\ln\lambda^{-1}}(1-\lambda^t)+\frac{\theta_2}{\ln\lambda^{-1}}(1-\lambda^t)\int_0^t\phi(s)\eta_sds\\
&~~~+\theta_3\left(\int_0^t\phi(s)^2ds \int_0^t\int_0^s(\lambda^2)^{s-\tau}\eta_\tau^2d\tau ds\right)^{\frac{1}{2}},
\end{split}\end{equation}
where the second inequality follows from (\ref{eq3-h3}) and Fubini's theorem \cite{XY}, and the last inequality is true due to Cauchy-Schwarz inequality. Further, using similar methods in (\ref{eq3-7}), we have
\begin{equation}\begin{split}\label{eq3-10}
\int_0^t\int_0^s(\lambda^2)^{s-\tau}\eta_\tau^2d\tau ds\leq\frac{1}{\ln\lambda^{-2}}(1-\lambda^{2t})\int_0^t\eta_s^2ds.
\end{split}\end{equation}
Substituting (\ref{eq3-10}) into (\ref{eq3-9}), we conclude that the result holds.

\end{document}